\documentclass[pdflatex,sn-mathphys-num]{sn-jnl}

\usepackage{graphicx}
\usepackage{multirow}
\usepackage{amsmath,amssymb,amsfonts}
\usepackage{amsthm}
\usepackage{mathrsfs}
\usepackage[title]{appendix}
\usepackage{xcolor}
\usepackage{textcomp}
\usepackage{manyfoot}
\usepackage{booktabs}
\usepackage{algorithm}
\usepackage{algorithmicx}
\usepackage{algpseudocode}
\usepackage{listings}
\usepackage{hyperref}

\usepackage{tikz}
\usetikzlibrary{trees, calc, decorations.pathreplacing}

\usepackage{eucal}          
\usepackage{amscd}          
\usepackage{latexsym}
\usepackage{mathtools}      
\usepackage{dsfont}

\usepackage[all]{xy}

\usepackage{tcolorbox}

\usepackage{float}          
\usepackage{multicol}

\usepackage{subcaption}

\usepackage{epigraph}

\theoremstyle{thmstyleone}
\newtheorem{thm}{Theorem}[section]

\newtheorem{lem}[thm]{Lemma}

\newtheorem{prop}[thm]{Proposition}
\newtheorem{proposition}[thm]{Proposition}

\newtheorem{cor}[thm]{Corollary}

\theoremstyle{thmstyletwo}
\newtheorem{example}[thm]{Example}

\newtheorem{remark}[thm]{Remark}

\theoremstyle{thmstylethree}
\newtheorem{definition}[thm]{Definition}

\newtheorem*{theorem*}{Theorem}
\newtheorem*{corollary*}{Corollary}

\newtheorem*{Satz*}{Satz}

\newtheorem*{Proposal*}{Proposal}

\newcommand{\raiz}{\mathfrak{r}}

\DeclareMathOperator{\diam}{diam}

\begin{document}

\title[]{Spectral Geometry and Heat Kernels on Phylogenetic Trees.}

\author*[1]{\fnm{Ángel Alfredo} \sur{Morán Ledezma}}\email{angel.ledezma@kit.edu}

\affil*[1]{\orgdiv{Institute of Photogrammetry and Remote Sensing}, \orgname{Karlsruhe Institute of Technology},
  \orgaddress{\street{Englerstr. 7}, \city{Karlsruhe}, \postcode{76131},
  \state{Karlsruhe}, \country{Germany}}}

\abstract{
We develop a unified spectral framework for finite ultrametric phylogenetic 
trees, grounding the analysis of phylogenetic structure in operator theory 
and stochastic dynamics in the finite setting. For a given finite ultrametric measure space \((X,d,m)\), we introduce the ultrametric Laplacian $L_X$ as the generator of a continuous time Markov chain with transition rate \(q(x,y)=k(d(x,y))m(y)\). We establish its complete spectral theory, 
obtaining explicit closed-form eigenvalues and an eigenbasis supported on the 
clades of the tree. For phylogenetic applications, we associate to any 
ultrametric phylogenetic tree $\mathcal{T}$ a canonical operator 
$L_{\mathcal{T}}$, the ultrametric phylogenetic Laplacian, whose jump rates 
encode the temporal structure of evolutionary divergence. We show that the 
geometry and topology of the tree are explicitly encoded in the spectrum and 
eigenvectors of $L_{\mathcal{T}}$: eigenvalues aggregate branch lengths 
weighted by clade mass along ancestral paths, while eigenvectors are supported 
on the clades, with one eigenspace attached to each internal node. From this 
we derive three main contributions: a spectral reconstruction theorem with 
linear complexity $O(|X|)$; a rigorous geometric interpretation of the 
spectral gaps of $L_{\mathcal{T}}$ as detectors of distinct evolutionary 
modes, validated on an empirical primate phylogeny; an eigenmode decomposition 
of biological traits that resolves trait variance into contributions from 
individual splits of the phylogeny; and a closed-form centrality index for 
continuous-time Markov chains on ultrametric spaces, which we propose as a 
mathematically grounded measure of evolutionary distinctiveness. All results 
are exact and biologically interpretable, and are supported by numerical 
experiments on empirical primate data.
}

\keywords{phylogenetic trees, ultrametric analysis, spectral geometry, heat kernels, evolutionary distinctiveness, evolutionary modes}

\maketitle

\tableofcontents

\section{Introduction}

Nature organizes complexity through hierarchy. From the nested basins of a river catchment to the genealogical structure of a language family, hierarchical structure emerges wherever a complex system evolves through successive branching events, leaving behind a record of its own history \cite{rammal1986ultrametricity}. Yet perhaps the most ancient and compelling 
instance of this principle is found in life itself. All living organisms carry 
within their DNA a signature of their evolutionary heritage, and by recognizing 
and studying the patterns of these signatures, biologists are able to reconstruct 
a common origin \cite{phylogenybook}. This is the idea behind the Tree of Life, 
an image Darwin already sketched in 1837, more than two decades before 
\textit{On the Origin of Species }\cite{darwin1859origin}, organizing the diversity 
of life through branching and ramification. 

The mathematical object that captures 
this structure is the \emph{ultrametric phylogenetic tree}: a rooted, weighted tree 
in which the distance from the root to every leaf is the same, encoding the 
hierarchical and temporal structure of evolutionary divergence.

Beyond the reconstruction of evolutionary history, phylogenetic trees have 
become indispensable tools across the life sciences. Their applications span 
from population genetics and phylogeography, where they enable inference of 
past demography and historical migration events, to epidemiology, where they 
have proven essential for tracing the spread of infectious diseases across 
hosts and geographies. In microbiology, they provide one of the most natural 
and powerful measures of diversity, while in ecology they shed light on 
community assembly, interspecific interactions, and species responses to 
environmental change \cite{lewitus2016}. In medicine, the "tree of cells" is studied in the evolution of a tumor \cite{phylogenybook}. And in conservation biology, 
the shape and branch lengths of a phylogenetic tree encode the evolutionary 
distinctiveness of species, a quantity central to modern prioritization 
frameworks \cite{Redding2014,Isaac2007}. 

Despite a plethora of applications, a unified 
mathematical framework for the spectral analysis of ultrametric phylogenetic 
trees, one that yields explicit formulas, exact reconstruction theorems, and 
biologically interpretable operators, has remained absent from the literature. 
This paper develops such a framework, encompassing spectral theory, stochastic 
dynamics, and trait analysis, with applications ranging from phylogenetic 
reconstruction to evolutionary conservation.

The Laplacian operator appears across a remarkably broad range of physical 
phenomena: from the propagation of waves and the diffusion of heat to the 
quantum mechanical description of electron motion and the oscillatory dynamics 
of fluids. That a single mathematical object governs such diverse phenomena 
reflects a deep connection between the geometry of the underlying space and 
the spectral properties of the operator defined on it. This connection is the 
central subject of spectral geometry \cite{Berard1986}. The field traces its 
origins to Chladni's eighteenth-century experiments with vibrating plates and 
Rayleigh's investigations into acoustics, and was further catalyzed by Kac's 
celebrated question: can one hear the shape of a drum? \cite{Kac1966}. This 
question was ultimately answered in the negative: Gordon, Webb, and Wolpert 
constructed two distinct planar domains that are isospectral yet 
non-isometric \cite{GWW1992}, showing that the spectrum of the Laplacian alone 
does not always suffice to reconstruct the underlying geometry.

This problem 
has since been extended to discrete structures such as graphs, where analogous 
questions of spectral reconstruction and geometric inference arise naturally. 
In this direction, ultrametric analysis has proven particularly effective: 
Bradley and Mor\'{a}n showed that graphs can be reconstructed from the spectrum 
of an associated $p$-adic Laplacian \cite{bradleymoranshape}. More broadly, 
the idea of extracting geometric information from spectral data has become 
central across many areas of science, and is today a driving force in machine 
learning and geometric deep learning, where eigenvalues, geometric priors, and 
graph kernels are essential ingredients in the analysis of structured 
data \cite{Chung1997, Bronstein2021GDL}.

Ultrametric trees occupy a privileged 
position in this landscape: their discrete and hierarchical structure makes the 
spectrum and eigenvectors fully explicit and, moreover, the geometry of the tree 
is encoded directly in the spectral data, enabling, as we show in this paper, 
a suite of analytical tools for the study of phylogenetic structure.

The spectral analysis of phylogenetic trees was pioneered by Lewitus and 
Morlon \cite{lewitus2016}, who introduced the modified graph Laplacian 
$\Delta = D - W$, where $D_{ii} = \sum_j w_{ij}$ is the degree matrix and $W$ 
is the full pairwise distance matrix between all $2n-1$ nodes of the tree. In 
that framework, the tree is analyzed as a network and the eigenvalues of $\Delta$ 
are used to construct a spectral density profile; eigengaps are employed 
heuristically to identify modes of diversification, including the separation of 
distinct evolutionary lineages within empirical phylogenies. On the operator 
side, Bendikov, Cygan, and Woess developed a general framework for isotropic 
Markov generators on ultrametric spaces \cite{BendikovCyganWoess2019,Bendikov2018}, defining 
hierarchical Laplacians via a choice function on the balls of the space, with 
pure point spectrum and compactly supported eigenfunctions. In a related but independent direction, Kozyrev developed a theory of 
ultrametric pseudodifferential operators on infinite ultrametric spaces, 
establishing diagonalization results in bases of ultrametric wavelets \cite{kozyrev2005}. 
Both frameworks are developed in the infinite setting and without 
reference to phylogenetic applications. The ultrametric analysis approach has been successfully applied in studying many other biological problems, see for example \cite{Zuniga2022,MoranLedezma2025,ABZ2014,mathPhys_p30,Khrennikov2021Ultrametric,DXKM2021} and, more recently, in the study of branching coral
growth and calcification dynamics \cite{FuquenTibata2026}. Both lines of work 
leave open the same fundamental question: is there a spectral framework for 
ultrametric phylogenetic trees that is simultaneously operator-theoretic, fully 
explicit, and biologically interpretable?

The present paper answers this question by introducing the \emph{ultrametric Laplacian} $L_X$ as the central object of study. A continuous-time Markov chain on $X$ describes the dynamics of a particle jumping between states of $X$ waiting at each state an exponentially distributed random time before moving to the next. In the phylogenetic setting, $X$ is the set of taxa. The rate at which the particle jumps from $x$ to $y$ is defined by a function $q(x,y) \geq 0$. In an ultrametric space, where the hierarchical topology is encoded in the distance function $d$, it is natural for this rate to depend on $d(x,y)$: taxa sharing a more recent common ancestor should communicate more readily than those whose common ancestor is more ancient. This motivates the choice $q(x,y) = k(d(x,y))$ for some positive kernel $k$, leading to the Markov generator \begin{equation*} L_X u(x) = \sum_{y \in X} k(d(x,y))\,(u(y) - u(x))\,m(y), \end{equation*} where $m$ is a probability measure on $X$. The framework developed here applies to ultrametric phylogenetic trees with arbitrary branching, not only to binary trees. 

We start Section \ref{sec:Ultrametric spaces and trees.} by introducing basic definitions. In particular, we introduce the concept of \emph{topological tree}, a rooted tree which encodes the branching topology of a given finite ultrametric space. In Section \ref{subsec:Ultrametric trees and Phylogenetic trees.}, we show how any phylogenetic tree has attached naturally an ultrametric space; through Lemmas \ref{lem:topologicaltree} and \ref{lem:topologicaltree2} we establish the well-known bijection between ultrametric trees and finite ultrametric spaces \cite{rammal1986ultrametricity}, creating a dictionary between phylogenetic terminology and the theory of ultrametric Laplacians.

In Section \ref{sec:spectral}, we introduce $L_X$, an operator acting on 
functions defined on the leaves of an ultrametric tree, and establish its 
complete spectral theory. The eigenvalues admit explicit closed-form 
expressions in terms of the geometry of the tree, and the eigenvectors are 
supported on the clades, with one eigenspace attached to each internal 
node, making the spectral structure fully transparent 
without numerical diagonalization. The first main result is a \emph{spectral 
reconstruction theorem} (see Theorem \ref{thm:reconstructiontheorem}): a labeled ultrametric phylogenetic tree can be 
recovered, up to realization, from the \emph{spectral encoding} $\sigma^e(X)$, 
a planar sequence of pairs $(\lambda_n, m_n)$ ordered by a breadth-first 
traversal of the tree, with linear complexity $O(|X|)$. To prove this result we use an explicit probability measure referred to as the \emph{Lebesgue measure of the tree}. This provides an optimal tool for the storage, access, and simulation of an ultrametric phylogenetic tree.

Following the strategy of \cite{lewitus2016}, we then study the spectral gaps in this setting. Given a phylogenetic tree \(\mathcal{T}\), we associate what we call the \emph{ultrametric phylogenetic Laplacian} of \(\mathcal{T}\), denoted by \(L_{\mathcal{T}}\), an ultrametric Laplacian attached to the underlying ultrametric space with jump rates depending on \(h_0-h(x\wedge y)\), where \(h_0\) is a fixed reference height and \(h(x\wedge y)\) denotes the height of the divergence event between taxa \(x\) and \(y\). Since in this case the jump rate follows
\[
F(h_0-h(x\wedge y))=\frac{d}{dt}\mathbb{P}(X_{t+h}=y|X_t=x)\big|_{t=0},
\]
we obtain a consistent picture linking the biological information of the phylogenetic tree to the random process: two species separated by a very old common ancestor (high height of their LCA) have a small jump rate; that is, a jump between phylogenetically distant taxa is a rare event. On the other hand, recently diverged taxa (small height of their LCA) have large jump rates and are better connected. This leads to a geometrical interpretation of the spectrum and, in particular, of the spectral gaps. The eigenvalues of the ultrametric Laplacian encode the hierarchical structure of the tree. Each eigenvalue
\[
\lambda(u) = \sum_{n \in \gamma_r(u)} m(n)\, l(n, \mathrm{Father}(n))
\]
aggregates the branch lengths $l(n, \mathrm{Father}(n))$ along the ancestral path $\gamma_r(u)$ from a clade $u$ to the root, weighted by the mass $m(n)$ of each intermediate ancestor. This reflects the relative species richness of the corresponding subtree. Large eigenvalues thus arise from clades whose ancestral paths traverse long branches or pass through taxonomically rich intermediate nodes, linking spectral magnitude to both evolutionary depth and clade diversity. Consequently, the spectral gaps reflect distinct evolutionary modes. To test the theory, we construct the phylogenetic Laplacian $L_{\mathcal{T}}$ on the tree of Primate genera with 109 leaves obtained from the TimeTree dataset \cite{Kumar2022}; the spectral gap separates distinct evolutionary modes, isolating Strepsirrhini from Simiiformes without supervision. Moreover, the choice of kernel \(F\) acts as a contrast function at different scales of the hierarchy: a sigmoid kernel achieves complete spectral separation between the parvorders Platyrrhini and Catarrhini, a separation not visible under the baseline kernel.

A further result concerns the decomposition of trait variance along the phylogenetic 
tree. We provide an explicit and natural orthonormal basis \(\{\psi_P\}\) for the ultrametric Laplacian, leading to an eigenmode decomposition of any function $f: X \to \mathbb{R}$ representing a biological trait. Moreover, if \(c_P=\langle f,\psi_P\rangle\), we can decompose the variance as $\mathrm{Var}_m(f) = \sum_P c_P^2$, where each coefficient 
$c_P$ measures the contrast between trait averages in the two clades generated 
by the split $P$, weighted by their relative mass, while also admitting a geometrical interpretation as the projection of $f$ onto $\psi_P$. This decomposition is in 
the spirit of the phylogenetic independent contrasts of 
Felsenstein \cite{Felsenstein1985} and the orthonormal variance decomposition 
of Ollier, Couteron, and Chessel \cite{Ollier2006}, but derived directly from 
the eigenbasis of $L_X$, which generalizes the Haar-like wavelets 
of \cite{Gorman2023}, rather than from an \emph{ad hoc} topological construction. 
Large coefficients signal splits where substantial trait divergence occurred 
between clades of comparable size, providing a natural and interpretable basis 
for phylogenetic comparison. As an illustration, we analyze this decomposition for three traits on the phylogenetic tree of Primate genera, where trait data were obtained from the PanTHERIA dataset \cite{Pantheria}. The cumulative variance profiles reveal that body mass accumulates most of its variance at low eigenvalues, longevity concentrates it in an intermediate spectral band, and litter size rises gradually across the entire spectrum. These distinct profiles illustrate that the decomposition not only summarizes variance but locates it along the eigenvalue axis, distinguishing traits whose variation is captured by a few deep splits from those distributed more uniformly across the tree. More broadly, the eigenmode decomposition connects naturally with the wider toolkit of phylogenetic comparative methods. A systematic comparison of the geometric framework developed here with established stochastic approaches to trait evolution on trees, both analytically and on empirical datasets, is deferred to a forthcoming companion study.

In Section \ref{sec:centrality}, we extend the 
random walk centrality of Noh and Rieger \cite{Nohcentrality2004} to the 
continuous-time setting, obtaining a closed-form expression for the CTMC 
centrality $C_{\mathrm{CTMC}}(i)$ on ultrametric spaces. The index admits a dual 
interpretation: dynamically, it quantifies the accessibility of a state under 
the stochastic evolution; topologically, it reflects the ramification of the 
path connecting the state to its ancestors in the ultrametric tree. As an 
application to phylogenetic conservation, a leaf with low centrality 
corresponds to a species with few close relatives, occupying a long and poorly 
branched lineage, precisely the signature of a phylogenetically distinct 
species whose evolutionary history is disproportionately large. Such species 
are central to conservation prioritization frameworks such as the EDGE of 
Existence program \cite{Redding2014,Isaac2007}. The CTMC centrality thus provides a 
mathematically grounded index of evolutionary distinctiveness, derived not 
from \emph{ad hoc} branch-length summation but from the stochastic geometry of the 
ultrametric space. Compared with existing measures of phylogenetic isolation such as evolutionary distinctiveness, $C_{\mathrm{CTMC}}$ offers several structural differences compared with existing measures: it incorporates information from the entire tree topology rather than only the root-to-leaf path; the kernel function provides an interpretable and mathematically justified mechanism to tune the trade-off between sensitivity to recent and ancient divergences; and the formulation extends naturally to non-ultrametric trees and phylogenetic networks, addressing a limitation noted in \cite{Redding2014} for most existing metrics. We close this introduction by noting that the entire framework can be extended to non-ultrametric phylogenetic trees, though in most cases at the cost of losing the explicit formulas.

\section{Ultrametric spaces and trees.}\label{sec:Ultrametric spaces and trees.}

The concepts of ultrametric spaces and ultrametric trees are central in this work. In this section we introduce the notions of ultrametric space, ultrametric tree and ultrametric phylogenetic tree. We observe the relationship between these three objects creating a dictionary that will be used for the rest of this work.  

\begin{definition}
  A metric space is a pair \((X, d)\), where \(X\) is a non-empty set and \(d : X \times X \to [0, \infty)\) is a function such that, for all \(x, y, z \in X\):
\begin{enumerate}
    \item \(d(x, y) = 0 \iff x = y\),
    \item \(d(x, y) = d(y, x)\),
    \item \(d(x, z) \leq d(x, y) + d(y, z)\).
\end{enumerate}

An ultrametric space is a metric space \((X, d)\) in which the metric \(d\) satisfies the strong triangle inequality:
\[
d(x, z) \leq \max\{ d(x, y), d(y, z) \} \quad \text{for all } x, y, z \in X.
\]
\end{definition}
\begin{example}
    Consider the set \(X = \{a, b, c, d\}\), and define \(d : X \times X \to \mathbb{R}\) as follows:
\[
d(x, y) =
\begin{cases}
0 & \text{if } x = y, \\
2 & \text{if } \{x, y\} \subset \{a, b\} \text{ or } \{x, y\} \subset \{c, d\}, \\
3 & \text{otherwise.}
\end{cases}
\]
It is easy to verify that \(d\) is an ultrametric on \(X\). The ultrametric structure can be represented by the following tree:

\begin{center}
\begin{tikzpicture}[level distance=1.2cm,
  level 1/.style={sibling distance=2.4cm},
  level 2/.style={sibling distance=1.2cm}]
  \node {$B_3 = X$}
    child {node {$B_2^1 = \{a, b\}$}
      child {node {$a$} }
      child {node {$b$} }
    }
    child {node {$B_2^2 = \{c, d\}$}
      child {node {$c$} }
      child {node {$d$} }
    };
\end{tikzpicture}
\end{center}

\end{example}

The concept of tree is intimately related with the one of an ultrametric space. The topology of any finite ultrametric space can be described by a tree as shown in the example above. In order to introduce terminology and fix notation we continue with some definitions. 

A \emph{graph} is a pair \((V, E)\), where \(V\) is a finite set of vertices and \(E \subset V \times V\) is a set of edges. A \emph{(combinatorial) tree} \(T\) is a connected, acyclic graph. A tree \(T\) having a distinguished node called the \emph{root}, denoted by \(r\), is called a \emph{rooted} tree. 

For each node \(n\in T\) we define the \emph{history} of \(n\) as the unique sequence of nodes connecting the root \(r\) with the node \(n\) including the extremes. We denote this set by \(\gamma_r(n)\).

The number \(|\gamma_r(n)|-1\) will be called the \emph{level} of \(n\). We say that \(m\in T\) is an \emph{ancestor} of \(n \in T\) if \(m\in \gamma_r(n)\). Given two vertices \(n\) and \(m\), their \emph{least common ancestor} is the unique vertex of maximal level that is an ancestor of both \(n\) and \(m\) and will be denoted by \(n\wedge m\). In general, a path connecting two different nodes  \(n,m \in T\setminus\{r\}\) will be denoted by \(\gamma(n,m)\). If \(m\in T\) is an ancestor of \(n\in T\) such that \(\gamma(n,m)=\{n,m\}\) then \(m\) is refer as the \emph{father} of \(n\) and we use the notation \(F(n)=m\). In this case, we also say that \(m\) and \(n\) are \emph{consecutive}. In this case \(n\) is said to be a \emph{child} of \(m\). A vertex with no children is called a \emph{leaf}, while a vertex that is not a leaf is called an \emph{internal node}.

The topology induced by the ultrametric can be described by rooted tree. Each internal node corresponds to a closed ball in \(X\) with respect to \(d\), and each leaf represents an element of \(X\). We refer to this tree as the topological tree associated with the ultrametric space \((X, d)\).

\begin{definition}
Let \((X, d)\) be a finite ultrametric space. The \emph{topological tree} associated with \((X, d)\) is a rooted tree \(T\) with the following properties:
\begin{enumerate}
    \item The set of leaves of \(T\) is in bijection with \(X\).
    \item Each internal node of \(T\) is associated with a ball in \(X\) of the form \(B(x, r) = \{y \in X : d(x, y) \leq r\}\) for some \(x \in X\) and \(r > 0\), and every ball in \(X\) arises in this way.
    \item For any two leaves \(x, y \in X\), the smallest ball containing both corresponds to their lowest common ancestor in \(T\).
\end{enumerate}
\end{definition}
Therefore there is a one to one correspondence between the balls generated by the ultrametric $d$ and the nodes $n\in T$. Henceforth, we will use both terms interchangeably when the tree $T$ refers to the topological tree. A ball will be refer as \(n\), \(B\) or \(B_n\) when necessary. We also introduce the notation \(B_n^{+}=B_{F(n)}\), and write \(n^{+}\) for \(F(n)\) when the context is clear. In particular \(B_r=X\). \newline

\subsection{Ultrametric trees and Phylogenetic trees.}\label{subsec:Ultrametric trees and Phylogenetic trees.}

For a given tree \(T\), define a \emph{branch} as the edge connecting two consecutive internal nodes \(u,v\in T\). We can associate a length function to the set of branches denoted by \(l(u, v)>0\). In general, when a tree has attached a function to its edges we call such tree a \emph{weighted tree}. The pair \(\mathcal{ T}=(T,l)\) is called an \emph{ultrametric tree} if the sum of the lengths of the branches connecting the root and any leaf is constant. 

A species is one of the most fundamental units of biology. Over time, species are shaped by evolutionary forces such as mutation and natural selection, which drive changes at both the molecular and morphological level. A key outcome of these processes is that a species may either give rise to two distinct lineages that evolve independently (a speciation event) or disappear entirely through extinction. The cumulative result of such events is a branching pattern that can be represented as a tree structure, commonly referred to as the tree of life \cite{phylogenybook}.   

A \emph{phylogenetic tree} is a tree equipped with a length function \(\mathcal{T}=(T,l)\) that represents the evolutionary history of a set of entities; it describes a hypothetical pattern of speciation events that occurred in the past, each internal node represents the speciation events, and the leaves represent the analyzed "present time" species. The length between two consecutive internal nodes represents the time between those two speciation events. A \emph{rooted phylogenetic tree} is a phylogenetic tree \(\mathcal{T}=(T,l)\) such that \(T\) is a rooted tree. If \(\mathcal{T}\) is an ultrametric tree, then \(\mathcal{T}\) is called a \emph{ultrametric phylogenetic tree}. 

Let \(\phi:X\rightarrow t(X)\) be a \emph{labeling}, that is a bijective functions from the set of leaves to the set of taxa \(t(X)\). An ultrametric phylogenetic tree \(\mathcal{T}\) equipped with a labeling will be refereed as \textit{labeled phylogenetic ultrametric tree}. An internal node of a labeled phylogenetic tree will be called a \emph{clade} of \(\mathcal{T}\).
\\
Any of these trees has a natural associated ultrametric, that is if \(\mathcal{T}\) is an ultrametric tree then for any pair of leaves \(x,y\in X\) we define
\begin{equation}
    \label{phylometric}
    d_l(x,y):=2\sum_{\substack{u\rightarrow v \\ u,v\in\gamma(x\wedge y,x)}} l(u, v).
\end{equation}
That is, the distance between two leaves is the sum of the lengths of the branches along the path connecting them which necessarily passes through  their least common ancestor. We also define the height of an internal node \(n\) as 
\[h(n):=\frac{d_l(x,y)}{2},\]
where \(n=x\wedge y\) for some \(x,y \in X\), this is well defined by the ultrametric inequality. The distance may not have a direct biological interpretation, nevertheless the height \(h\) of an internal node in a phylogenetic tree represent the time between the present and the split event. The underlying tree \(T\) of a given phylogenetic tree \(\mathcal{T}=(T,l)\) need not be binary; that is, a single node may give rise to more than two descendant lineages simultaneously.

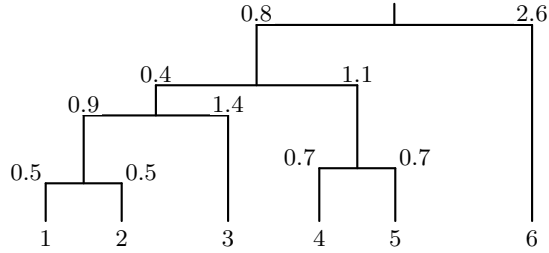
\begin{figure}[H]
  \centering
  \begin{tikzpicture}[
  thick, line cap=round,
  leaf lab/.style={font=\small},
  num/.style={fill=white, inner sep=1pt, font=\scriptsize}
]

\def\Hbase{0.0}
\def\HoneTwo{0.5}      
\def\HfourFive{0.7}    
\def\HoneTwoThree{1.4} 
\def\Hall{1.8}         
\def\Htop{2.6}         
\def\xone{0.6}
\def\xtwo{1.6}
\def\xthree{3.0}
\def\xfour{4.2}
\def\xfive{5.2}
\def\xsix{7.0}

\draw (\xone,\Hbase)   -- (\xone,\HoneTwo);
\draw (\xtwo,\Hbase)   -- (\xtwo,\HoneTwo);
\draw (\xthree,\Hbase) -- (\xthree,\HoneTwoThree);
\draw (\xfour,\Hbase)  -- (\xfour,\HfourFive);
\draw (\xfive,\Hbase)  -- (\xfive,\HfourFive);
\draw (\xsix,\Hbase)   -- (\xsix,\Htop);

\draw (\xone,\HoneTwo) -- (\xtwo,\HoneTwo);      
\draw (\xfour,\HfourFive) -- (\xfive,\HfourFive);         
\coordinate (m12) at ($(\xone,\HoneTwo)!0.5!(\xtwo,\HoneTwo)$);
\coordinate (m45) at ($(\xfour,\HfourFive)!0.5!(\xfive,\HfourFive)$);
\draw (m12) -- ($(m12)+(0,\HoneTwoThree-\HoneTwo)$);
\draw (m45) -- ($(m45)+(0,\Hall-\HfourFive)$);

\draw ($(m12)+(0,\HoneTwoThree-\HoneTwo)$) -- (\xthree,\HoneTwoThree);
\coordinate (m123) at ($($(m12)+(0,\HoneTwoThree-\HoneTwo)$)!0.5!(\xthree,\HoneTwoThree)$);

\draw (m123) -- ($(m123)+(0,\Hall-\HoneTwoThree)$);

\draw ($(m123)+(0,\Hall-\HoneTwoThree)$) -- ($(m45)+(0,\Hall-\HfourFive)$);
\coordinate (mall) at ($($(m123)+(0,\Hall-\HoneTwoThree)$)!0.5!($(m45)+(0,\Hall-\HfourFive)$)$);

\draw (mall) -- ($(mall)+(0,\Htop-\Hall)$);
\draw ($(mall)+(0,\Htop-\Hall)$) -- (\xsix,\Htop);

\coordinate (midTop) at ($($(mall)+(0,\Htop-\Hall)$)!0.5!(\xsix,\Htop)$);
\draw (midTop) -- ++(0,0.28);

\node[num, above left]  at (\xone,\HoneTwo) {0.5};
\node[num, above right] at (\xtwo,\HoneTwo) {0.5};

\node[num, above left]  at (\xfour,\HfourFive) {0.7};
\node[num, above right] at (\xfive,\HfourFive) {0.7};

\node[num, above] at ($(m12)+(0,\HoneTwoThree-\HoneTwo)$) {0.9}; 
\node[num, above] at (\xthree,\HoneTwoThree) {1.4};              
\node[num, above] at ($(m123)+(0,\Hall-\HoneTwoThree)$) {0.4};    
\node[num, above] at ($(m45)+(0,\Hall-\HfourFive)$) {1.1};        
\node[num, above] at ($(mall)+(0,\Htop-\Hall)$) {0.8};            
\node[num, above] at (\xsix,\Htop) {2.6};

\node[leaf lab, below] at (\xone,\Hbase) {1};
\node[leaf lab, below] at (\xtwo,\Hbase) {2};
\node[leaf lab, below] at (\xthree,\Hbase) {3};
\node[leaf lab, below] at (\xfour,\Hbase) {4};
\node[leaf lab, below] at (\xfive,\Hbase) {5};
\node[leaf lab, below] at (\xsix,\Hbase) {6};

\end{tikzpicture}
  \caption{Phylogenetic tree. Here \(d(1,2)=1\) and \(d(3,5)=3.6\)  }
\label{fig:phylotree}
\end{figure}
By construction, we conclude that every ultrametric (phylogenetic) tree has attached a natural ultrametric space \((X_{\mathcal{T}},d)\), where \(X_{\mathcal{T}}\) is the set of leaves of \(\mathcal{T}\). Moreover, the following lemma follows from the definitions above.
\begin{lem} \label{lem:topologicaltree}
      The topological tree \(T^{*}\) of the ultrametric space  \((X_{\mathcal{T}},d)\) is the underlying tree of  \(\mathcal{T}\), that is \(\mathcal{T}=(T^{*},l)\).  
\end{lem}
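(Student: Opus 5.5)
The plan is to build an explicit isomorphism of rooted trees between the underlying tree \(T\) of \(\mathcal{T}\) and the topological tree \(T^{*}\) of \((X_{\mathcal{T}},d_l)\). Since \(T^{*}\) is characterized by its three defining properties, it suffices to check that \(T\) itself has them. Concretely, we send each leaf to itself and each internal node \(n\in T\) to its clade \(C_n=\{x\in X: n\in\gamma_r(x)\}\), the set of leaves below \(n\). We must then show three things: that each \(C_n\) is a closed \(d_l\)-ball, that every ball arises as some \(C_n\), and that least common ancestors correspond to smallest balls containing both points.

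\textbf{Clades are balls.} Fix an internal node \(n\) and a leaf \(x\in C_n\). The claim is that \(C_n=B(x,2h(n))\).
\begin{itemize}
\item If \(y\in C_n\), then \(x\wedge y\) is a descendant of \(n\) (or \(n\) itself). Heights are strictly decreasing down the tree because branch lengths are positive, so \(d_l(x,y)=2h(x\wedge y)\le 2h(n)\).
\item If \(y\notin C_n\), then \(x\wedge y\) is a strict ancestor of \(n\), so \(d_l(x,y)>2h(n)\).
\end{itemize}
Hence \(C_n\) is the closed ball stated above. Here \(h(n)\) is well defined because \(\mathcal{T}\) is ultrametric: all root-to-leaf sums are equal, so for any leaf below \(n\) the distance from \(n\) to that leaf is the same.

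\textbf{Every ball is a clade.} Let \(B(x,\rho)\) be a ball and let \(n\) be the highest ancestor of \(x\) with \(2h(n)\le\rho\). If \(n=x\), the ball is the singleton \(\{x\}\), which corresponds to the leaf. Otherwise, the same dichotomy argument as in the previous step gives \(B(x,\rho)=C_n\). Distinct internal nodes give distinct clades, provided every internal node has at least two children; I expect to assume this, since it is implicit in the notion of a speciation node. Under that assumption, \(n\mapsto C_n\) is a bijection from internal nodes onto the balls that are not singletons. Ancestry in \(T\) corresponds to inclusion of clades, so the map preserves the rooted tree structure, with the root going to \(X\). Finally, for leaves \(x,y\), the clade \(C_{x\wedge y}\) contains both. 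Any clade containing both corresponds to a common ancestor of \(x\) and \(y\), hence to an ancestor of \(x\wedge y\), and so contains \(C_{x\wedge y}\). This gives property 3.

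\textbf{Main obstacle.} The delicate point is that the paper's definition of the topological tree is only a characterization, so the proof also needs \(T^{*}\) to be determined up to isomorphism by properties 1–3. I plan to handle this by observing that the three properties force the vertex set to be the set of balls together with the points. They also force the edges to be the covering relations of inclusion, since father–child pairs are consecutive clades. The second place needing care is the degenerate case of unary internal nodes, which would make the map non-injective; I would exclude these explicitly as a standing convention. With both points settled, \(T\cong T^{*}\), and the length function \(l\) transfers along the isomorphism, giving \(\mathcal{T}=(T^{*},l)\).
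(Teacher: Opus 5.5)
Your argument is correct and follows the same approach as the paper, which only asserts that the lemma ``follows from the definitions above''. You supply the details it omits: identify each node with its clade, show that clades are exactly the closed $d_l$-balls with ancestry matching inclusion, and read off the three defining properties. Your standing assumption that every internal node has at least two children is genuinely needed and only implicit in the paper, since a unary node would make the underlying tree strictly larger than $T^{*}$.
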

This lemma serves as a dictionary between the theory of ultrametric spaces and ultrametric (phylogenetic) trees. In other words, the \emph{skeleton} of the ultrametric (phylogenetic) tree \(\mathcal{T}\) is the topological tree of the ultrametric space attached to it. 

Given a finite ultrametric space \((X,d)\) with topological tree \(T\), it is possible to construct an ultrametric tree from it. For a given internal node \(n\in T\) , define \(h(n):=d(x,y)/2\) where \(n=x\wedge y\) for some leaves \(x,y\in X\). Define \(l(n,m):=|h(n)-h(m)|\). Then the tree \((T,l)\) is ultrametric, thus the following result follows. 
\begin{lem}\label{lem:topologicaltree2}
    Let \((X,d)\) be a finite ultrametric space, then there is an ultrametric tree \(\mathcal{T}=(T,l)\) such that \((X,d)=(X_{\mathcal{T}},d_l)\)
\end{lem}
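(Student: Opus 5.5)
The plan is to take the topological tree \(T\) of \((X,d)\), put the lengths \(l(n,m)=|h(n)-h(m)|\) on consecutive nodes as described before the statement, and verify three things. First, \(h\) is well defined on the nodes. Second, \((T,l)\) is an ultrametric tree with strictly positive branch lengths. Third, the metric \(d_l\) from \eqref{phylometric} agrees with \(d\) on \(X\). We set \(h(x)=0\) for every leaf \(x\in X\); this is consistent with \(d(x,x)=0\).

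\emph{Well-definedness of \(h\).} Let \(n\) be an internal node, so \(n\) is a ball \(B_n\). By property 3 of the topological tree, \(n=x\wedge y\) exactly when \(B_n\) is the smallest ball containing \(x\) and \(y\). In a finite ultrametric space every ball \(B\) with at least two points has a well-defined diameter \(\diam B\), and every point of \(B\) is a centre. I will show that for any \(x,y\) with \(x\wedge y=n\) we have \(d(x,y)=\diam B_n\). Suppose instead \(d(x,y)<\diam B_n\). Then the ball \(B(x,d(x,y))\) contains both \(x\) and \(y\) and is strictly smaller than \(B_n\), which contradicts minimality. So \(h(n)=\diam(B_n)/2\) does not depend on the choice of \(x,y\).

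\emph{Monotonicity and positivity.} If \(m=F(n)\), then \(B_n\subsetneq B_m\). I claim \(\diam B_n<\diam B_m\). Take \(x\in B_n\) and \(z\in B_m\setminus B_n\). If \(d(x,z)\le \diam B_n\), then \(z\in B(x,\diam B_n)=B_n\), which is impossible. Hence \(\diam B_m\ge d(x,z)>\diam B_n\). For a leaf \(x\) whose father is \(m\), we get \(h(m)>0=h(x)\). So \(h\) strictly decreases along every edge going down from the root, and \(l(n,F(n))=h(F(n))-h(n)>0\).

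\emph{Ultrametric tree and recovery of \(d\).} Along any descending path the sum of the lengths telescopes. For every leaf \(x\), the sum over \(\gamma_r(x)\) equals \(h(r)-h(x)=h(r)\), which does not depend on \(x\). Therefore \((T,l)\) is an ultrametric tree. For leaves \(x\neq y\) with \(n=x\wedge y\), the same telescoping applied to \(\gamma(n,x)\) gives \(\sum l=h(n)-0\). So \(d_l(x,y)=2h(n)=d(x,y)\), and \(d_l(x,x)=0\). This shows \((X_{\mathcal{T}},d_l)=(X,d)\). By Lemma \ref{lem:topologicaltree}, the skeleton of \(\mathcal T\) is again \(T\), so the construction is consistent.

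The main obstacle is the first step: identifying \(d(x,y)\) with the diameter of the ball \(B_{x\wedge y}\). This needs the standard ultrametric facts that every point of a ball is a centre and that nested balls are either disjoint or comparable. I would prove these directly from the strong triangle inequality. Everything after that is telescoping.
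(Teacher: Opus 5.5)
Your proposal is correct and uses the same construction as the paper: the topological tree \(T\) with heights \(h(n)=d(x,y)/2\) for \(n=x\wedge y\) and branch lengths \(l(n,m)=|h(n)-h(m)|\). The paper only asserts that \((T,l)\) is ultrametric and that the lemma follows. Your checks fill in the details it leaves implicit: that \(h\) is well defined via \(d(x,y)=\diam B_{x\wedge y}\), that branch lengths are strictly positive, and that telescoping gives both a constant root-to-leaf length and \(d_l=d\).
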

From these two lemmas we conclude that there is a bijection between ultrametric trees and finite ultrametric spaces. Henceforth, we use the term ultrametric tree to refer to the mathematical object \(\mathcal{T}=(T,l)\) and we use the name ultrametric phylogenetic tree when we want to refer to the biological interpretations arising from this structure.

\section{Spectral geometry for phylogenetic trees}\label{sec:spectral}

An ultrametric space equipped with a measure \(m\) is called a \emph{measurable} ultrametric space. In particular in the finite setting this is just a positive function on the leaves. A \emph{probability measure} is a measure satisfying \[\sum_{x\in X}m(x)=1.\]
We will develop the theory for any \(m\).

Given a finite ultrametric space \((X, d,m)\) with topological tree \(T\) and measure \(m\), consider the operator
\[
L_X u(x) = \sum_{y\in X} k(d(x, y)) \big(u(y) - u(x)\big) \, m(y),
\]
where \(k : [0, \infty) \to \mathbb{R}\) is a given function. Although $L_X$ will be interpreted in Section\ref{sec:centrality} as the generator 
of a continuous-time Markov chain with transition rates 
$k(d(x,y))m(y)$, we first develop its spectral theory in full 
generality, as this provides the analytical foundation for the 
probabilistic results that follow.

This operator acts on functions \(u : X \to \mathbb{R}\). We refer to $L_{X}$ as the \emph{ultrametric Laplacian operator} attached to the triple $(X,d,m)$.  

Consider the (canonical) basis \(\{e_y\}_{y \in X}\) of functions on \(X\) equipped with the canonical inner product, where \(e_y(x) = \delta_{x,y}\) and \(\delta_{x,y}\) is the Kronecker delta, defined by
\[
\delta_{x,y} =
\begin{cases}
1 & \text{if } x = y, \\
0 & \text{otherwise}.
\end{cases}
\]
With respect to this basis, the operator \(L_{X}\) can be represented as a matrix, called the \emph{ultrametric Laplacian matrix}.

\begin{definition}
The \emph{ultrametric Laplacian matrix} associated with the operator \(L_X\) is the \(|X| \times |X|\) matrix \(L= (L_{x,y})_{x, y \in X}\) whose entries are given by
\[
L_{x,y} =
\begin{cases}
k(d(x, y))\, m(y) & \text{if } y \neq x, \\
- \sum_{z \neq x} k(d(x, z))\, m(z) & \text{if } y = x.
\end{cases}
\]
\end{definition}
This matrix corresponds to the matrix representation of \(L_X\) with respect to the canonical basis \(\{e_y\}_{y \in X}\).

When the kernel satisfies \(k\big(diam(B_n)\big)-k\big(diam(B_{F(n)})\big)>0 \) for any \(n\in T\setminus X \) different from the root, the ultrametric Laplacian \(L_X\) coincides with the Hierarchical Laplacian (see \cite{Angulo2019, Bendikov2018}). In order to see this connection we introduce this operator in the setting of finite ultrametric spaces. \\
For a ball \(B\), define the operator 
\[
(P_B f)(x):=\mathbf{1}_{\{x\in B\}}\;\frac{1}{m(B)}\!\sum_{y\in X} f(y)\,m(y).
\]
\begin{definition}
Let $\{a(B_n)\}_{n\in T\setminus X}\subset [0,\infty)$ be weights on balls. The hierarchical Laplacian associated with \(m\) and \(a\) is defined as
\[
 (L_a f)(x)\;=\;-\sum_{n\in T\setminus X: x\in B_n} a(B_n)\,\Big(f(x)-(P_{B_n} f)(x)\Big).
\]
\end{definition}
\begin{proposition}[Hierarchical decomposition of $L_X$]
Assume that for each $n\in T\setminus  X$  different from the root the kernel \(k\) satisfies \(k\big(diam(B_n)\big)-k\big(diam(B_{F(n)})\big)>0 \). For \(n\in T\setminus \ 
X\) different from the root define
\[
a(B_n)\;:=\;m(B_n)\Big(k\big(diam(B_n)\big)-k\big(diam(B_{F(n)}\big)\Big)\;\;\ge 0,
\]
and \[a(X):=m(B_n)k(X)\]

Then
\[
\quad L_X f \;=\; L_a f\quad
\]
\end{proposition}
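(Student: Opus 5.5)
The plan is to fix a leaf $x\in X$ and compare $(L_Xf)(x)$ and $(L_af)(x)$ directly, by writing the kernel as a telescoping sum along the history of $x$. Two details of the statement have to be read in the only way that makes sense, and I will fix them first. The averaging operator should be $(P_Bf)(x)=\mathbf 1_{\{x\in B\}}\,m(B)^{-1}\sum_{y\in B}f(y)m(y)$, with the sum running over $B$ rather than over $X$. The root weight should be $a(X)=m(X)\,k(\diam X)$. The whole argument then rests on the ultrametric identity $d(x,y)=\diam(B_{x\wedge y})$ for $x\neq y$. This identity follows from property 3 of the topological tree together with the strong triangle inequality: the smallest ball containing $x$ and $y$ has diameter exactly $d(x,y)$.

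\textbf{Step 1 (telescoping the kernel).} List the internal nodes on the history of $x$ as $B_0\subsetneq B_1\subsetneq\cdots\subsetneq B_K=X$, where $B_0=B_{F(x)}$ and $B_{i+1}=B_{F(n_i)}$. These are exactly the balls $B_n$ with $n\in T\setminus X$ and $x\in B_n$, so they index the sum defining $L_a$. Set $c_i=k(\diam B_i)-k(\diam B_{i+1})$ for $i<K$ and $c_K=k(\diam X)$, so that $a(B_i)=m(B_i)c_i$ for every $i$. Then for each $j$,
\[
k(\diam B_j)=\sum_{i=j}^{K}c_i .
\]

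\textbf{Step 2 (partition and swap).} Every $y\neq x$ lies in exactly one shell $B_j\setminus B_{j-1}$, with the convention $B_{-1}=\{x\}$, and for such $y$ we have $x\wedge y=n_j$. Hence $k(d(x,y))=\sum_{i\ge j}c_i$. Substituting into $L_Xf(x)$ and exchanging the order of summation, the coefficient $c_i$ multiplies the sum over all $y$ lying in the shells $j\le i$, that is, over $y\in B_i\setminus\{x\}$. The term $y=x$ contributes zero, so it can be added back freely. This gives
\[
L_Xf(x)=\sum_{i=0}^{K}c_i\sum_{y\in B_i}\big(f(y)-f(x)\big)m(y)=\sum_{i=0}^{K}c_i\,m(B_i)\big((P_{B_i}f)(x)-f(x)\big),
\]
which is $\sum_i a(B_i)\big((P_{B_i}f)(x)-f(x)\big)=(L_af)(x)$.

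\textbf{Main obstacle.} There is no real analytic difficulty; the work is bookkeeping. The first point needing care is the identification of the shells: one must show $B_j\setminus B_{j-1}=\{y: x\wedge y=n_j\}$, which uses the ball–node correspondence of the topological tree. The second is the index range of the swapped sum, in particular that the innermost shell $j=0$ is $B_0\setminus\{x\}$. The positivity hypothesis on $k$ is not used in the identity itself. It only guarantees $a\ge 0$, so that $L_a$ is a genuine hierarchical Laplacian in the sense of the cited works.
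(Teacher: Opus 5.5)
Your proof is correct and takes essentially the same route as the paper. Both arguments rest on telescoping $k(d(x,y))=\sum_{n\in\gamma_r(x\wedge y)}a(B_n)/m(B_n)$ along the ancestral chain and then exchanging the order of summation; the paper does this entrywise on the point masses $\delta_z$, handling off-diagonal and diagonal entries separately, while you do it for a general $f$ at once. Your corrected readings of $P_B$ (sum over $B$) and of the root weight $a(X)=m(X)\,k(\diam X)$ are exactly what the paper's computation implicitly uses.
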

\begin{proof}
    Let \(\delta_z\) the indicator function of the point \(z\in X\). Let \(y_0\in X\) with \(y_0 \neq z\). It is clear that 
    \[P_{B_n}=\delta_{z\in B_n}\frac{m(z)}{m(B_n)}, \]
    where \(\delta_{z\in B_n}\) is \(1\) if \(z\in B_n\) and zero otherwise. Therefore, we have 
    \begin{equation*} L_a\delta_z(y_0)=-\sum_{n\in\gamma_r(y_0)\setminus\{y_0\}}a(B_n)\left(0-\delta_{z\in B_n}\frac{m(z)}{m(B_n)}\right)=\sum_{n \in \gamma_r(y_0\wedge z)} a(B_n)\frac{m(z)}{m(B_n)}
    \end{equation*}
Note that 
\[
\sum_{n \in \gamma_r(y_0\wedge z)} \frac{a(B_n)}{m(B_n)}
=\sum_{n \in \gamma_r(y_0\wedge z)}\left(k(d(B_n))-k(d(B_{F(n)}))\right)
= k\left(d(y_0,z)\right).
\]
Therefore, \(L_a\delta_z(y_0)=L_X\delta_z(y_0)\). On the other hand, 
\begin{equation*}
    \begin{split}
        -\sum_{n\in\gamma_r(z)\setminus \{z\}}a(B_n)\left( 1-\frac{m(z)}{m(B_n)} \right)&=-\sum_{n\in\gamma_r(z)\setminus \{z\}}\frac{a(B_n)}{m(B_n)}(m(B_n)-m(z))\\&=-\sum_{n\in\gamma_r(z)\setminus \{z\}}\frac{a(B_n)}{m(B_n)}\sum_{y\in B_n:z\neq y}m(y)\\
        &=-\sum_{z\neq y}k(d(z,y))m(y)
    \end{split}
\end{equation*}
Therefore \(L_X=L_a\).
\end{proof}
\begin{remark}
When the kernel satisfies the strict monotonicity condition, 
$L_X$ coincides with the hierarchical Laplacian $L_C$ of 
Bendikov et al. \cite{BendikovCyganWoess2019}, restricted 
to the finite setting. Outside this regime, $L_X$ is strictly 
more general. Note that the framework of Kozyrev \cite{kozyrev2005}, 
while formally related, is developed exclusively for infinite 
ultrametric spaces and does not directly apply here.
\end{remark}

\subsection{The spectrum}

Inspired by the theory of spectral geometry and manifold learning, one of the goals of this section is to show the explicit connection between the topology of a finite ultrametric space and the eigenvalues and eigen-vectors of the operator \(L_X\). We are particularly interested in seeing how these connection leads to alternative ways to analyze a phylogenetic tree. Therefore, the description of the spectral nature of this operator is central for the theory. 

We initiate this study by constructing an orthonormal basis of eigenvectors of \(L_X\). Following \cite{Bendikov2018}, given \(n\in T\setminus X\) an internal node, the functions 
\begin{equation}
    \label{eq:eigenvectorsultrametric}\varphi_{B_n,l}=\frac{\mathbf{1}_{B_l}}{m(B_l)}-\frac{\mathbf{1}_{B_{n}}}{m(B_{n})},
\quad\text{for each child }l\in C(n).
\end{equation}
are eigenfunctions of the hierarchical Laplacian \(L_a\), and therefore, in the special case when \(k(diam(B_n))-k(diam(B_{F(n)})>0\), for any \(n\in T\setminus\{X\}\), they are also eigenfunctions of \(L_X\). 

Moreover, the function \(\varphi_0=\textbf{1}\equiv 1\), denoting the trivial eigenvector with eigenvalue \(0\),  together with the set of all these functions form a complete system of the space \(L^2(X,m)\) \cite{Bendikov2018}. In particular, notice that for a node \(n\) the set of functions \(\varphi_{B_n,l}\), with \(l\in C(n)\) span the set 
\[\mathcal{V}_n:=\left\{ \psi: \psi|_{B_l,} \, \text{is constant for all \(l\in C(n)\)},\, \sum_{l\in C(n)} m(B_l)\psi|_{B_l,}=0, \, Supp \, \psi \subset B_n\right\}.\]The dimension of this space is \(|C(n)|-1
\). This results extends to any ultrametric Laplacian \(L_X\) with positive kernel \(k\).

\begin{prop} \label{prop:eigenvalueofultrametricoperator}
    Let \(\varphi_{B_n,l}\) defined as in equation \ref{eq:eigenvectorsultrametric}. Then \(\varphi_{B_n,l}\) is an eigenfunction of \(L_X\), with eigenvalue 
    \begin{equation} \label{eigenvaluereal}
        \begin{split}
            \lambda_n&=-\sum_{l\in \gamma_r(n)} m(B_l)\left[k(diam(B_l))-k(diam(B_{F(l)})\right]\\
            &=-\sum_{y\in X\setminus B_n} k(d(x_0,y))dm(y) - k(diam(B_n))m(B_n),
        \end{split}
    \end{equation}
    where \(x_0\in B_n\), with multiplicity \(m_n:=|C(n)|-1\).
\end{prop}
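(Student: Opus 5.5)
The plan is to prove the eigenfunction property by direct computation, without the monotonicity assumption needed for the hierarchical decomposition. I will work with an arbitrary element of $\mathcal{V}_n$ rather than with $\varphi_{B_n,l}$ alone. Each $\varphi_{B_n,l}$ lies in $\mathcal{V}_n$: it is supported in $B_n$ and constant on every child ball $B_{l'}$. Its $m$-integral is $1-1=0$.

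So fix $\psi\in\mathcal{V}_n$ and write $\psi_{l}$ for its constant value on $B_l$, $l\in C(n)$. The constraint then reads $\sum_{l\in C(n)}m(B_l)\psi_l=0$. I will show $L_X\psi=\lambda_n\psi$ pointwise, splitting into two cases according to where $x$ lies.

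\emph{Case $x\notin B_n$.} Here $\psi(x)=0$, so $L_X\psi(x)=\sum_{y\in B_n}k(d(x,y))\psi(y)m(y)$. By the ultrametric inequality, $d(x,y)=d(x,x_0)$ for every $y\in B_n$, so the kernel factors out of the sum. What remains is $k(d(x,x_0))\sum_{y\in B_n}\psi(y)m(y)=0$.

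\emph{Case $x\in B_l$ for some child $l\in C(n)$.} I split the sum over $y$ into three ranges: $y\in B_l$, $y\in B_{l'}$ with $l'\neq l$, and $y\notin B_n$.
\begin{itemize}
\item For $y\in B_l$: $\psi(y)-\psi(x)=0$, so these terms contribute nothing. This also disposes of the diagonal term $y=x$, so $k(0)$ never enters.
\item For $y\in B_{l'}$, $l'\neq l$: $d(x,y)=\operatorname{diam}(B_n)$. The contribution is $k(\operatorname{diam}B_n)\sum_{l'\neq l}m(B_{l'})(\psi_{l'}-\psi_l)$. Using the mean-zero constraint, $\sum_{l'\neq l}m(B_{l'})\psi_{l'}=-m(B_l)\psi_l$, so this equals $-k(\operatorname{diam}B_n)\,m(B_n)\,\psi_l$.
\item For $y\notin B_n$: $\psi(y)=0$ and $d(x,y)=d(x_0,y)$. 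The contribution is $-\psi_l\sum_{y\notin B_n}k(d(x_0,y))m(y)$.
\end{itemize}
Adding the three ranges gives $L_X\psi(x)=\lambda_n\psi(x)$, with $\lambda_n$ equal to the second expression in \eqref{eigenvaluereal}. This value is independent of $x$ and of $l$, which proves the eigenfunction claim.

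To reach the first (telescoping) expression, I decompose $X\setminus B_n$ into the shells $B_{F(j)}\setminus B_j$ for $j\in\gamma_r(n)$, $j\neq r$. On each shell $d(x_0,y)=\operatorname{diam}B_{F(j)}$. This gives
\[
\sum_{y\notin B_n}k(d(x_0,y))m(y)=\sum_{j\in\gamma_r(n)\setminus\{r\}}k(\operatorname{diam}B_{F(j)})\bigl(m(B_{F(j)})-m(B_j)\bigr).
\]
I then add $k(\operatorname{diam}B_n)m(B_n)$ and regroup by the mass $m(B_j)$ (Abel summation along the ancestral path). Each $m(B_j)$ then appears with coefficient $k(\operatorname{diam}B_j)-k(\operatorname{diam}B_{F(j)})$. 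This requires the convention $k(\operatorname{diam}B_{F(r)}):=0$ at the root, which I will state explicitly.

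For the multiplicity, the eigenspace attached to $n$ is $\mathcal{V}_n$, which has dimension $|C(n)|-1=m_n$. The spaces $\mathcal{V}_n$ for distinct internal nodes, together with the constants, are mutually orthogonal in $L^2(X,m)$ and exhaust it by the completeness result of \cite{Bendikov2018}. Hence the eigenspace attached to $n$ has dimension exactly $m_n$; if several nodes share the same value of $\lambda_n$, their dimensions add. The main obstacle is purely bookkeeping: getting the Abel summation to land exactly on the stated indices, and handling the root convention consistently.
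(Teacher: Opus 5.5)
Your proof is correct and follows essentially the same route as the paper. Both evaluate $L_X$ pointwise, splitting according to whether $x$ lies outside $B_n$ or in one of its child balls, and both let the mean-zero condition do the work. The only difference is that the paper treats the two-valued functions $\varphi_{B_n,l}$ with explicit constants $a,b$, while you treat an arbitrary $\psi\in\mathcal{V}_n$, which is equivalent by linearity. Your version also supplies three things the paper's proof leaves implicit: the ultrametric reason that $k(d(x,\cdot))$ is constant on $B_n$ when $x\notin B_n$; the Abel-summation derivation of the telescoped formula, with the convention $k(\diam B_{F(r)})=0$ made explicit; and the orthogonal dimension count that justifies the multiplicity $|C(n)|-1$.
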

\begin{proof}
    Let \(n\in T\setminus X\) an internal node. Let \[
\varphi_{B_n,l}=\frac{\mathbf{1}_{B_l}}{m(B_l)}-\frac{\mathbf{1}_{B_{n}}}{m(B_{n})},
\quad\text{for a given child }l\in C(n).
\] define the real number
\(\lambda_n:=-\sum_{y\in X\setminus B_n} k(d(x_0,y))dm(y) - k(diam(B_n))m(B_n).\) Define \(a=\frac{1}{m(B_l)}-\frac{1}{m(B_{n})}\) and \(b=-\frac{1}{m(B_{n})}\). Let \(x\in B_l\), therefore, \(\varphi_{B_n,l}(x)=a\), and 
\[\sum_{y\in B_l}k(d(x,y))(\varphi_{B_n,l}(y)-\varphi_{B_n,l}(x))m(y)=0.\] Therefore
\[L_X \,\varphi_{B_n,l}(x)=\sum_{y\in B_{n}\setminus B_l}k(d(x,y))(b-a)m(y)+\sum_{y\in X\setminus B_n}k(d(x,y))(0-a)m(y).\]
Since \(x\in B_l\), for any \(y\in B_n\setminus B_l\), \(k(d(x,y))=k(diam(B_n))\) and 
\begin{equation*}
    \begin{split}
        L_X \,\varphi_{B_n,l}(x)&=k(diam(B_n))(b-a)m(B_n\setminus B_l)-a\sum_{y\in X\setminus B_n}k(d(x,y))m(y)\\
        &=a\left[-k(diam(B_n))\left(1-\frac{b}{a}\right)m(B_n\setminus B_l)+\sum_{y\in X\setminus B_n}k(d(x,y))m(y).\right]
    \end{split}
\end{equation*}
In a similar way, for \(x\in B_n\setminus B_l\) where \(\varphi_{B_n,l}(x)=b\), the following holds
\[L_X \,\varphi_{B_n,l}(x)=b\left[-k(diam(B_n))\left(1-\frac{a}{b}\right)m( B_l)+\sum_{y\in X\setminus B_n}k(d(x,y))m(y).\right]\]
A direct computation leads to the following equalities
\[\left(1-\frac{b}{a}\right)m(B_n\setminus B_l)=m(B_n),\] and 
\[\left(1-\frac{a}{b}\right)m(B_l)=m(B_n).\]
Hence, in both cases 
\[L_X \, \varphi_{B_l,n}(x)=\varphi_{B_l,n}(x) \lambda_n.\]
Lastly, if \(x\in X\setminus B_n\) then 
\[L_X \, \varphi_{B_l,n}(x)=\sum_{y\in X}k(d(x,y)) \varphi_{B_l,n}(y)m(y)=0,\]
since \(\varphi_{B_l,n}\) has mean zero.
\end{proof}
Extracting geometric information about an object in terms of the spectra of an operator is the core idea behind spectral geometry. It turns out that for a certain measure \(\mu\) the spectrum of the operator \(L_X\) allow us to recover the diameters of the balls of the ultrametric space \((X,d)\), when the kernel \(k\) is a bijection. Moreover, after a certain ordering and padding all the geometrical and topological information can be recovered from the modified sequence. In order to prove this assertion we need to define a \emph{decoration} of a tree and the \textit{Lebesgue measure} of a tree.  
\begin{definition}
Let \(T\) be the topological tree associated with a finite ultrametric space \((X, d)\), with root \(r\).
  \begin{enumerate}
      \item A decoration of  \(T\) is a bijection \(\varphi:T \rightarrow A\), where \(A\) is a set. 
      \item  For each node \(v\) in \(T\), denote by \(C(v)\) the set of its children. The \emph{Lebesgue measure} \(\mu\) on the tree is defined recursively as follows:

\begin{enumerate}
    \item \(\mu(r) = 1\).
    \item For any node \(v\) with children \(C(v) = \{v_1, \dots, v_k\}\), set \(\mu(v_i) = \mu(v) / |C(v)|\) for each \(i = 1, \dots, k\), where \(|C(v)|\) denotes the cardinality of \(C(v)\).
    \item For any leaf \(x \in X\), \(\mu(x)\) is determined by the product of the inverses of the cardinalities along the unique path from \(r\) to \(x\):
    \[
    \mu(x) = \prod_{v \in \gamma_r(x)\setminus\{x\}} \frac{1}{|C(v)|}.\]
\end{enumerate}
  \end{enumerate}
\end{definition}
For example, an ultrametric tree \(\mathcal{T}=(T,l)\) can be viewed as the tree \(T\) decorated with the diameters of the balls attached to its ultrametric space via the function \(l\),  adopting the convention that each leaf \(x\in X\) is decorated with a zero. The spectrum leads to another decoration of the form \((\lambda_n, m_n)\), here we label the leafs by the pairs \((0,0)\).

\begin{proposition}
  \label{decorationequiv}
  Let \((X,d,\mu)\) be finite ultrametric space equipped with its Lebesgue measure and \(T\) its topological tree. Then the decoration \( ( \lambda_n,m_n)_{n \in T\setminus X} \) consisting of eigenvalues with its multiplicities of the operator \(L_X\) and \((0,0)\) in the leafs  reconstructs the decoration \((k(\diam(B_n)))_{n\in T\setminus X}\) and \((0,0)\) in the leafs.
\end{proposition}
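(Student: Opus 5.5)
We prove the statement by a top-down recursion on the tree, using the first expression for the eigenvalue in Proposition \ref{prop:eigenvalueofultrametricoperator}:
\[
\lambda_n=-\sum_{l\in \gamma_r(n)} \mu(B_l)\left[k(\diam(B_l))-k(\diam(B_{F(l)}))\right],
\]
with the convention that the root term is $\mu(X)\,k(\diam(X))=k(\diam(X))$, since $\mu(X)=1$.

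The first point is that, under the Lebesgue measure, the masses $\mu(B_l)$ depend only on the topology of $T$, and the decoration sees this topology. For each internal node $v$, the multiplicity $m_v=|C(v)|-1$ gives $|C(v)|=m_v+1$. Leaves are exactly the nodes decorated by $(0,0)$. Hence the decoration $(\lambda_n,m_n)$ determines $\mu(B_v)=\prod_{w\in\gamma_r(v)\setminus\{v\}}(m_w+1)^{-1}$ for every node $v$, by the recursive definition of $\mu$. (As written, the recursion defines $\mu$ on nodes; we take $\mu(B_v)=\mu(v)$, which is consistent with additivity because the children of $v$ split $\mu(v)$ equally.)

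Next, telescope along histories. For the root $r$ we have $\gamma_r(r)=\{r\}$, so $\lambda_r=-k(\diam(X))$ and $k(\diam(B_r))=-\lambda_r$. For an internal node $n\neq r$ with father $F(n)$, we have $\gamma_r(n)=\gamma_r(F(n))\cup\{n\}$. Subtracting the two eigenvalue formulas gives
\[
\lambda_n-\lambda_{F(n)}=-\mu(B_n)\left[k(\diam(B_n))-k(\diam(B_{F(n)}))\right],
\]
so
\[
k(\diam(B_n))=k(\diam(B_{F(n)}))-\frac{\lambda_n-\lambda_{F(n)}}{\mu(B_n)}.
\]
Since $\mu(B_n)>0$ and is already known from the topology, induction on the level of $n$ (breadth-first) recovers $k(\diam(B_n))$ for every internal node. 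The leaves keep the label $(0,0)$, which completes the reconstruction of the decoration $(k(\diam(B_n)))_{n\in T\setminus X}$.

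The main obstacle is bookkeeping rather than depth. One must check that the root term of the sum in Proposition \ref{prop:eigenvalueofultrametricoperator} is correctly interpreted, either by setting $k(\diam(B_{F(r)}))=0$ or by using the second expression with $X\setminus B_r=\emptyset$; the second expression gives exactly $\lambda_r=-k(\diam(X))\mu(X)$, which settles it. The other check is that the decoration really gives the tree structure, i.e.\ we are given the decorated tree $T$ itself and not merely the multiset of eigenvalues. Since the statement speaks of a decoration of $T$, the tree is given, and that suffices. If one additionally wants the diameters themselves, injectivity of $k$ (as mentioned in the text) yields $\diam(B_n)=k^{-1}(k(\diam(B_n)))$. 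The proof does not need the positivity or monotonicity of $k$, because Proposition \ref{prop:eigenvalueofultrametricoperator} holds for general $k$ by its direct verification.
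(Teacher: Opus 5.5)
Your proof is correct and follows the same route as the paper: a top-down induction on levels, using Proposition \ref{prop:eigenvalueofultrametricoperator}, the fact that under the Lebesgue measure the masses $\mu(B_v)$ are read off from the multiplicities via $|C(v)|=m_v+1$, and the fact that $\lambda_n$ is affine in $k(\diam(B_n))$ with nonzero coefficient $-\mu(B_n)$. Your telescoped relation $\lambda_n-\lambda_{F(n)}=-\mu(B_n)\bigl[k(\diam(B_n))-k(\diam(B_{F(n)}))\bigr]$ is a cleaner form of the paper's level-$l$ display, and it avoids the coefficient slips there: the coefficient of $k(\diam(B_n))$ should be $-\mu(B_n)$ rather than $-1/(m_{F(n)}+1)$, and each ancestor term should carry $-\mu(B_j)\,m_j/(m_j+1)$.
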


\begin{proof}
  Let \(r\) be the root of \(T\). Then \(\lambda_r=-k(\diam(B_r))\). Since \(|C(r)|=m_r+1\), we know that \(B_r\)  decomposes into \(m_r+1\) balls of measure \(\frac{1}{m_r+1}\). Let \(n\in C(r)\). Then 
  \[\lambda_n=-k(\diam(B_r))\frac{m_r}{m_r+1}-\frac{1}{m_r+1}k(\diam(B_n)).\]
  Therefore \(\diam(B_n)\) is completely determined by \(\lambda_n\). We now proceed by induction on the levels of \(T\). Assume that for levels \(1,...,l-1\), we have decorated the nodes. Then for \(n\) at level \(l\), we have

  \[
\lambda_n = \sum_{j \in \gamma_r(n)\setminus \{n\}} k(\diam(B_j))\frac{\mu(B_j)}{m_{j}+1}  
- \frac{1}{m_{F(n)}+1}\, k(\diam(B_n))
\]
Therefore \(k(diam(B_n))\) is determined by \(\lambda_n\). This completes the proof.
\end{proof}

\begin{definition}[Spectral encoding]
Let $T$ be a rooted tree with each node $n$ decorated by a pair $(\lambda_n, m_n)$, with $(0,0)$ assigned to all leaves. The canonical ordering of these pairs is defined inductively as follows:

\begin{enumerate}
    \item \textbf{Base step:} Start with the root node $r$. The first element of the sequence is $(\lambda_r, m_r)$.
    \item \textbf{Inductive step:} Assume that the sequence has been constructed up to some level, and let $S$ be the list of pairs added at the previous step. For each pair in $S$ that is not $(0,0)$ (i.e., for each internal node), add to the sequence, in the order determined by the tree, that is, visiting the nodes in breadth-first traversal order, the pairs corresponding to its children. For each leaf child, add $(0,0)$ in the respective position.
    \item \textbf{Termination:} Repeat the inductive step until no new internal nodes remain to expand; that is, until all subsequent pairs correspond to leaves and are $(0,0)$.
\end{enumerate}

\end{definition}

The spectral encoding  will be denoted by \(\sigma^{e}(X)\). Two labeled ultrametric phylogenetic trees
\[
\mathcal{T}_1=(T_1,l_1,\phi_1)
\quad \text{and} \quad
\mathcal{T}_2=(T_2,l_2,\phi_2)
\]
are said to be \textit{isomorphic} if there exists a bijection
\[
\Psi:V(T_1)\rightarrow V(T_2)
\]
such that
\[
(u,v)\in E(T_1) \iff (\Psi(u),\Psi(v))\in E(T_2),
\qquad
l_1(u,v)=l_2(\Psi(u),\Psi(v)),
\]
and, for every leaf \(x\in X_1\),
\[
\phi_1(x)=\phi_2(\Psi(x)).
\]
Two labeled ultrametric phylogenetic trees are called
\textit{equivalent realizations} if they are isomorphic. Example of two realizations of a labeled phylogenetic tree are shown below in Figure \ref{fig:ultrametric_equivalent}.

\begin{figure}[H]
\centering
\begin{tikzpicture}[
  thick, line cap=round,
  leaf lab/.style={font=\small}
]

\def\HbaseL{0.0}
\def\HbcL{0.9}
\def\HallL{1.8}

\def\xaL{0.8}
\def\xbL{1.9}
\def\xcL{3.0}

\draw (\xaL,\HbaseL) -- (\xaL,\HallL);
\draw (\xbL,\HbaseL) -- (\xbL,\HbcL);
\draw (\xcL,\HbaseL) -- (\xcL,\HbcL);

\draw (\xbL,\HbcL) -- (\xcL,\HbcL);
\coordinate (mbcL) at ($(\xbL,\HbcL)!0.5!(\xcL,\HbcL)$);

\draw (mbcL) -- ($(mbcL)+(0,\HallL-\HbcL)$);

\draw (\xaL,\HallL) -- ($(mbcL)+(0,\HallL-\HbcL)$);

\node[leaf lab, below] at (\xaL,\HbaseL) {$a$};
\node[leaf lab, below] at (\xbL,\HbaseL) {$b$};
\node[leaf lab, below] at (\xcL,\HbaseL) {$c$};

\node at (4.7,0.95) {$\equiv$};

\def\HbaseR{0.0}
\def\HbcR{0.9}
\def\HallR{1.8}

\def\xbR{6.0}
\def\xcR{7.1}
\def\xaR{8.2}

\draw (\xbR,\HbaseR) -- (\xbR,\HbcR);
\draw (\xcR,\HbaseR) -- (\xcR,\HbcR);
\draw (\xaR,\HbaseR) -- (\xaR,\HallR);

\draw (\xbR,\HbcR) -- (\xcR,\HbcR);
\coordinate (mbcR) at ($(\xbR,\HbcR)!0.5!(\xcR,\HbcR)$);

\draw (mbcR) -- ($(mbcR)+(0,\HallR-\HbcR)$);

\draw ($(mbcR)+(0,\HallR-\HbcR)$) -- (\xaR,\HallR);

\node[leaf lab, below] at (\xbR,\HbaseR) {$b$};
\node[leaf lab, below] at (\xcR,\HbaseR) {$c$};
\node[leaf lab, below] at (\xaR,\HbaseR) {$a$};

\end{tikzpicture}

\caption{Equivalent realizations of the same labeled ultrametric phylogenetic tree.}
\label{fig:ultrametric_equivalent}
\end{figure}
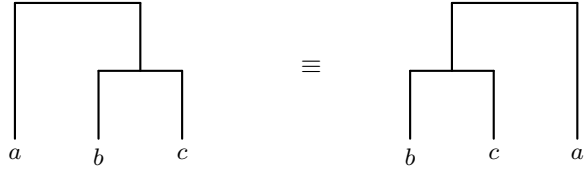
Therefore, it is clear that the spectral encoding determines a realization of the labeled
ultrametric phylogenetic tree. Different realizations corresponding to permutations of children at internal nodes produce different spectral encodings but represent equivalent labeled trees.

\begin{example} \label{examplecanonicalsequence}
Consider the following decorated rooted tree:
\begin{center}
\begin{tikzpicture}[level distance=1.2cm,
  every node/.style={draw=none},
  level 1/.style={sibling distance=3.5cm},
  level 2/.style={sibling distance=1cm}]
  \node {$r$}
    child {node {$n_1$}
      child {node {$a$}}
      child {node {$b$}}
      child {node {$c$}}
    }
    child {node {$n_2$}
      child {node {$d$}}
      child {node {$n_3$}
        child {node {$e$}}
        child {node {$f$}}
      }
    };
\end{tikzpicture}
\end{center}

The canonical ordered sequence of pairs is:
\[
(\lambda_r, m_r),\ 
(\lambda_{n_1}, m_{n_1}),\ 
(\lambda_{n_2}, m_{n_2}),\ 
(0,0),\ (0,0),\ (0,0),\ (0,0),\ (\lambda_{n_3}, m_{n_3}),\ 
(0,0),\ (0,0)
\]
where each $(0,0)$ corresponds to a leaf, and each $(\lambda_v, m_v)$ corresponds to an internal node.

\end{example}

\begin{thm}[Spectral reconstruction theorem for ultrametric trees.]\label{thm:reconstructiontheorem}
 Let \((\mathcal{T},\phi)\) be a labeled phylogenetic ultrametric tree.
Then \((\mathcal{T},\phi)\) can be reconstructed up to realization from
the spectral encoding of the ultrametric Laplacian \(L_X\) associated
with the underlying ultrametric space \((X,d,\mu)\), where \(\mu\) is
the Lebesgue measure.
\end{thm}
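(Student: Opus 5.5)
The reconstruction has two stages: first recover the combinatorial tree $T$ from the spectral encoding $\sigma^e(X)$ alone, then recover the branch lengths by applying Proposition \ref{decorationequiv} and inverting the kernel. Throughout I assume, as in the discussion preceding Proposition \ref{decorationequiv}, that $k$ is a known bijection onto its image. The leaf labels $\phi$ are attached in the order in which leaves appear in the breadth-first traversal; this is why the result holds only up to realization.

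\textbf{Recovering the tree.} I read $\sigma^e(X)$ with a queue, mirroring the breadth-first construction in the definition of the spectral encoding.
\begin{itemize}
\item The first pair $(\lambda_r,m_r)$ is the root. By Proposition \ref{prop:eigenvalueofultrametricoperator}, $m_r=|C(r)|-1$, so the root has exactly $m_r+1$ children.
\item These children are the next $m_r+1$ pairs of the sequence, and they are enqueued in that order.
\item In general, when a pair $(\lambda_n,m_n)\neq(0,0)$ is dequeued, it is an internal node with $m_n+1$ children, and these are the next $m_n+1$ unread pairs. A pair $(0,0)$ is a leaf and creates no children.
\end{itemize}
This is unambiguous because every internal node has at least two children, so $m_n\ge 1$. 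Hence a pair of the form $(\lambda,0)$ never arises for an internal node, and leaves are exactly the $(0,0)$ entries. By induction on the level, the parsed rooted tree is isomorphic to $T$, and each node $n$ of the parsed tree carries its own $\lambda_n$. Each node is read once, so this takes $O(|T|)=O(|X|)$ time, since a tree with no unary internal nodes has fewer than $2|X|$ vertices.

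\textbf{Recovering the metric.} With $T$ and the decoration $(\lambda_n,m_n)$ in hand, the Lebesgue measure $\mu$ is computable from $T$ alone: $\mu(B_n)=\prod_{v\in\gamma_r(n)\setminus\{n\}}1/|C(v)|$. Proposition \ref{decorationequiv} then recovers $k(\diam(B_n))$ level by level using the formula
\[
\lambda_n=-\sum_{l\in\gamma_r(n)}\mu(B_l)\bigl[k(\diam(B_l))-k(\diam(B_{F(l)}))\bigr].
\]
In this formula the coefficient of the unknown $k(\diam(B_n))$ is $-\mu(B_n)\neq 0$, and all other terms involve strict ancestors of $n$, already computed. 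Inverting $k$ gives $\diam(B_n)$ and so the height $h(n)=\diam(B_n)/2$. Setting $l(n,F(n))=h(F(n))-h(n)$, with $h=0$ at the leaves, gives the branch lengths. Lemmas \ref{lem:topologicaltree} and \ref{lem:topologicaltree2} confirm that $(T,l)$ is the ultrametric tree of $(X,d)$.

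\textbf{Labels and the main obstacle.} The encoding carries no taxon names, so the labeling is fixed by the breadth-first order in which the leaves occur in $\sigma^e(X)$. This is meaningful only relative to a chosen planar realization. Permuting children at internal nodes changes the sequence, but the resulting labeled trees are isomorphic, and this is exactly the ``up to realization'' clause.

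The step needing the most care is the invertibility in the metric stage. One must check that the recursion in Proposition \ref{decorationequiv} is non-degenerate at every node; it is, because $\mu(B_n)>0$. One must also check that $k$ is injective on the relevant diameters; this requires assuming $k$ is a bijection, which should be stated as a hypothesis. The combinatorial parsing is routine once we observe that internal multiplicities are at least one.
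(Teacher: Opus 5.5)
Your proposal is correct and follows the paper's own argument. You read the decorated topological tree off the breadth-first encoding, with $m_n+1=|C(n)|$ children per internal node, then use the level-by-level recursion behind Proposition \ref{decorationequiv} to recover $k(\diam(B_n))$, and invert $k$ to obtain the diameters and hence the branch lengths. Your version is more careful than the paper's in three respects: it states hypotheses the paper uses only tacitly (injectivity of $k$, and that the leaf labels are carried by the positions of the $(0,0)$ entries), and it works with the coefficient $-\mu(B_n)$ of $k(\diam(B_n))$ taken directly from Proposition \ref{prop:eigenvalueofultrametricoperator}, whose nonvanishing is exactly what makes the recursion invertible.
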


\begin{proof}
  Let \((\mathcal{T},\phi)\) be a labeled phylogenetic ultrametric tree.
  It is clear that the spectral encoding completely reconstruct a realization of the topological tree \(T\) decorated with the pairs \((\lambda_n,m_n)\). By Proposition \ref{decorationequiv} this decoration is equivalent to the diamenter decoration \(\diam(B_n)\). Since the finite ultrametric space \((X,d)\) is characterized by the topological tree \(T\) decorated in this way, the result follows. 
\end{proof}

\begin{remark}
    Notice that in the case of a binary tree, the decoration \((\lambda_n,m_n)\) can be substituted by the decoration \(\lambda_n\) since in this case \(m_n=1\) for all internal node \(n\).
\end{remark}

\begin{example} (Counterexample, isospectral spaces.) In this example we show the existence of two ultrametric spaces which are not isomorphic but have the same eigenvalues. 
Consider the following decorated rooted tree:
\begin{center}
\begin{tikzpicture}[level distance=1.2cm,
  every node/.style={draw=none},
  level 1/.style={sibling distance=3.5cm},
  level 2/.style={sibling distance=1.3cm},
  level 3/.style={sibling distance=0.8cm}]
  \node {$\hat{r}$}
    child {node {$\hat{n}_1$}
      child {node {$\hat{a}$}}
      child {node {$\hat{b}$}}
      child {node {$\hat{n}_3$}
        child {node {$\hat{c}$}}
        child {node {$\hat{d}$}}
      }
    }
    child {node {$\hat{n}_2$}
      child {node {$\hat{e}$}}
      child {node {$\hat{f}$}}
    };
\end{tikzpicture}
\end{center}

The canonical ordered sequence of pairs is:
\[
(\lambda_{\hat{r}}, m_{\hat{r}}),\ 
(\lambda_{\hat{n}_1}, m_{\hat{n}_1}),\ 
(\lambda_{\hat{n}_2}, m_{\hat{n}_2}),\ 
(0,0),\ (0,0),\ (\lambda_{\hat{n}_3}, m_{\hat{n}_3}),\ 
(0,0),\ (0,0),\ (0,0),\ (0,0).
\]
We will show that we can construct with this tree an isospectral ultrametric space to the one given in Example \ref{examplecanonicalsequence}. Assume that \(\diam(B_{n_i})=\diam(B_{\hat{n}_i})\), for \(i=1,2\) and \(diam(B_r)=\diam(B_{\hat{r}})\). Therefore it is easy to see that the first three pairs of eigenvalues are equal. Assume now that \(k(x,y)=d(x,y)\). Then imposing \(\lambda_{n_3}=\lambda_{\hat{n}_3}\), we obtain the equation 
\[\diam(B_{n_3})=\frac{1}{3}\diam(B_{n_2})+\frac{2}{3}\diam(B_{\hat{n}_3}).\]
Therefore, if we define the diameter of \(B_{n_3}\) in this way, we can construct two non-isometric but isospectral spaces. 
\end{example}
\begin{example}
    Consider the phylogenetic tree of Figure \ref{fig:phylotree}. Consider the gravitational kernel \(k(r)=\frac{1}{r^{2}}\) and take \(m(y)=1/|X|\), the counting measure. The respective eigenvalues are shown in the figure below: 

    \begin{figure}[H]
        \centering
        \includegraphics[width=0.7\textwidth]{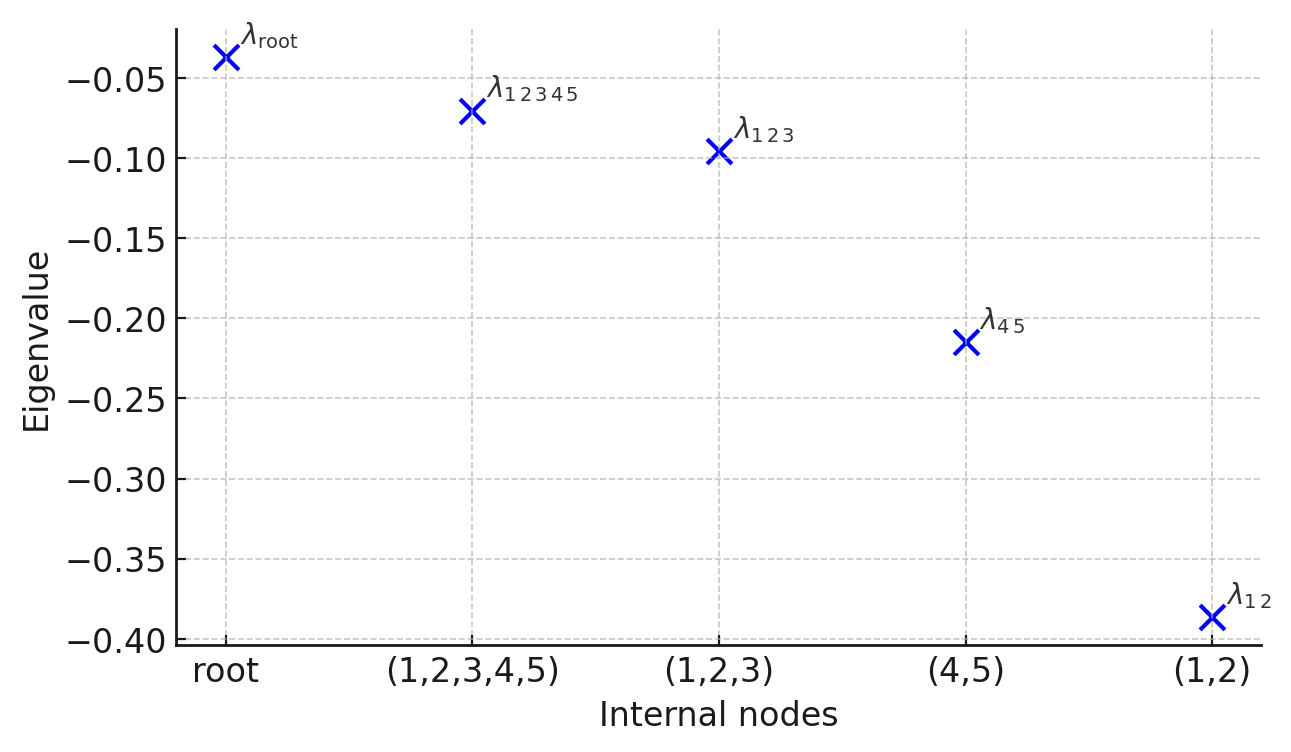}
        \caption{Eigenvalues of a phylogenetic tree}
        \label{fig:phylovalues}
    \end{figure}
    
\begin{figure}[H]
  \centering
  \resizebox{0.8\linewidth}{!}{
    \begin{minipage}{\linewidth}
      \centering
      
      \begin{tikzpicture}[
        thick, line cap=round,
        leaf lab/.style={font=\small},
        num/.style={fill=white, inner sep=1pt, font=\scriptsize}
      ]

      \def\Hbase{0.0}
      \def\HoneTwo{0.5}      
      \def\HfourFive{0.7}    
      \def\HoneTwoThree{1.4} 
      \def\Hall{1.8}         
      \def\Htop{2.6}        
      \def\xone{0.6}
      \def\xtwo{1.6}
      \def\xthree{3.0}
      \def\xfour{4.2}
      \def\xfive{5.2}
      \def\xsix{7.0}

      \draw (\xone,\Hbase)   -- (\xone,\HoneTwo);
      \draw (\xtwo,\Hbase)   -- (\xtwo,\HoneTwo);
      \draw (\xthree,\Hbase) -- (\xthree,\HoneTwoThree);
      \draw (\xfour,\Hbase)  -- (\xfour,\HfourFive);
      \draw (\xfive,\Hbase)  -- (\xfive,\HfourFive);
      \draw (\xsix,\Hbase)   -- (\xsix,\Htop);

      \draw (\xone,\HoneTwo) -- (\xtwo,\HoneTwo);                
      \draw (\xfour,\HfourFive) -- (\xfive,\HfourFive);          
      \coordinate (m12) at ($(\xone,\HoneTwo)!0.5!(\xtwo,\HoneTwo)$);
      \coordinate (m45) at ($(\xfour,\HfourFive)!0.5!(\xfive,\HfourFive)$);
      \draw (m12) -- ($(m12)+(0,\HoneTwoThree-\HoneTwo)$);
      \draw (m45) -- ($(m45)+(0,\Hall-\HfourFive)$);

      \draw ($(m12)+(0,\HoneTwoThree-\HoneTwo)$) -- (\xthree,\HoneTwoThree);
      \coordinate (m123) at ($($(m12)+(0,\HoneTwoThree-\HoneTwo)$)!0.5!(\xthree,\HoneTwoThree)$);

      \draw (m123) -- ($(m123)+(0,\Hall-\HoneTwoThree)$);

      \draw ($(m123)+(0,\Hall-\HoneTwoThree)$) -- ($(m45)+(0,\Hall-\HfourFive)$);
      \coordinate (mall) at ($($(m123)+(0,\Hall-\HoneTwoThree)$)!0.5!($(m45)+(0,\Hall-\HfourFive)$)$);

      \draw (mall) -- ($(mall)+(0,\Htop-\Hall)$);
      \draw ($(mall)+(0,\Htop-\Hall)$) -- (\xsix,\Htop);

      \coordinate (midTop) at ($($(mall)+(0,\Htop-\Hall)$)!0.5!(\xsix,\Htop)$);
      \draw (midTop) -- ++(0,0.28);

      \node[num, above left]  at (\xone,\HoneTwo) {0.5};
      \node[num, above right] at (\xtwo,\HoneTwo) {0.5};
     
      \node[num, above left]  at (\xfour,\HfourFive) {0.7};
      \node[num, above right] at (\xfive,\HfourFive) {0.7};
      
      \node[num, above] at ($(m12)+(0,\HoneTwoThree-\HoneTwo)$) {0.9}; 
      \node[num, above] at (\xthree,\HoneTwoThree) {1.4};             
      \node[num, above] at ($(m123)+(0,\Hall-\HoneTwoThree)$) {0.4};    
      \node[num, above] at ($(m45)+(0,\Hall-\HfourFive)$) {1.1};        
      \node[num, above] at ($(mall)+(0,\Htop-\Hall)$) {0.8};            
      \node[num, above] at (\xsix,\Htop) {2.6};                        
      \node[leaf lab, below] at (\xone,\Hbase) {1};
      \node[leaf lab, below] at (\xtwo,\Hbase) {2};
      \node[leaf lab, below] at (\xthree,\Hbase) {3};
      \node[leaf lab, below] at (\xfour,\Hbase) {4};
      \node[leaf lab, below] at (\xfive,\Hbase) {5};
      \node[leaf lab, below] at (\xsix,\Hbase) {6};

      \end{tikzpicture}

      \vspace{0.7em}
      {\Large$\Updownarrow$}
      \vspace{0.7em}

      \includegraphics[width=.9\linewidth]{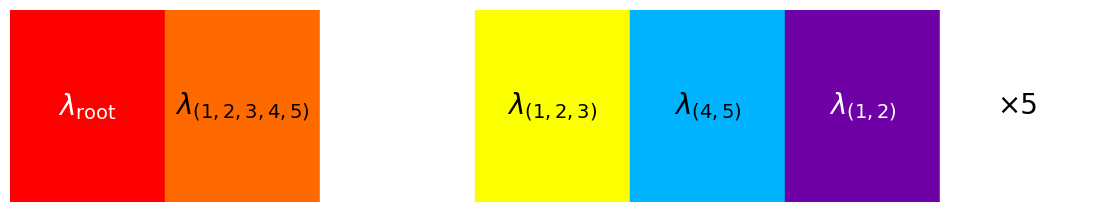}
    \end{minipage}
  }
  \caption{Phylogenetic tree and spectral encoding. For illustration, each eigenvalue was mapped to a corresponding frequency within the visible spectrum. The notation \(\times5\) symbolize \(5\) leafs.}
  \label{fig:tree-to-colors}
\end{figure}
    \end{example}

The extended spectrum is an efficient tool to reconstruct the ultrametric space via the spectrum of \(L_X\) by means of a Breadth-First Search (BFS), which is known to have complexity \(O(|X|)\). It follows that for a given finite ultrametric space, the construction of the spectrum and the measure has also linear complexity. Therefore, the extended spectrum already serves as an optimal tool for the storage, access, and simulation of an ultrametric space.  \newline

Lewitus and Morlon \cite{lewitus2016} introduced a spectral framework for  phylogenetic trees based on the \emph{Modified Graph Laplacian} (MGL), defined as  $\Delta = D - W$, where $D_{ii} = \sum_j w_{ij}$ is the degree matrix and $W$ is  the full pairwise distance matrix between all $2n-1$ nodes of the tree. The eigenvalues of $\Delta$ are used to construct a spectral density profile encoding  global properties of the tree shape. To identify modes of diversification within a tree, they employ the  \emph{eigen-gap heuristic}: if the largest gap in the ranked spectrum falls between $\mu_i$ and $\mu_{i+1}$, the tree is said to contain $i$ clusters of distinct evolutionary dynamics \cite{vonLuxburg2007}. This criterion is explicitly presented as a heuristic, with no formal proof \cite{lewitus2016}.

Our framework differs in three key aspects. First, our operator $L$ acts on the leaves of the tree and carries a natural probabilistic interpretation as the generator of a continuous time Markov chain with jump rates $q(x,y) = k(d(x,y))\,m(y)$, whereas the MGL has no such stochastic interpretation. Second, the eigenvalues of $L$ admit explicit closed-form expressions, making the spectral structure fully transparent without numerical diagonalization. 

Thirdly, in our framework for a particular kernel \(k\) and measure \(m\), the eigen-gaps have a rigorous geometric interpretation: a gap between $\lambda_i$ and $\lambda_{i+1}$ corresponds to a level $h^*$ in the ultrametric hierarchy where the accumulated mass-weighted branch length undergoes a significant jump, providing a formal interpretation in the context of phylogenetic trees.

We now elaborate on the last observation. For a phylogenetic ultrametric tree \(\mathcal{T}\), fix a reference height \(h_0>h(X)\), where \(h(X)\) is the height of the root and define the kernel 
\[k(d(x,y))=F(h_0-h(x\wedge y))\]
Together with the normalized counting measure \(m(x)=1/|X|\) for all \(x\in X\), we define the \emph{ultrametric phylogenetic Laplacian} of \(\mathcal{T}\) with kernel \(F\) as the attached ultrametric Laplacian with this measure and kernel:
\[L_{\mathcal{T}}f(x)=\sum_{y\in X}
F(h_0-h(x\wedge y))(u(y)-u(x))m(y).\]
Let us assume first that \(F(x)=x\). In this case, the jump-rates are \(h_0-h(x\wedge y)\), as discussed in the introduction, the biological interpretation is as follows : two species separate by a very old common ancestor (high height of their LCA) have a small rate jump, that is, jump between phylogenetically distant taxa is a rare event. On the other hand, sister taxa (small height of their LCA) have large jump rates. By Proposition \ref{prop:eigenvalueofultrametricoperator}, the eigenvalues of this operator are
\[\lambda(u)=\sum_{n\in\gamma_r(u)}m(n)l(n,n^+),\]
where \(l(n,n^+)\) is the branch length connecting \(n\) with its immediate ancestor \(n^+\). Hence each eigenvalue accumulates, along the ancestral path from the clade \(n\) to the root, the branch lengths \(l(n,n^+)\) weighted by the mass \(m(n)\) of each ancestral clade, where \(m(n)\) reflects the relative species richness of \(n\). Large eigenvalues correspond to a possible combination of two factors: deeper branches (from the root to the clade) along the path and rich clades with high taxon diversity.  Now let us analyze the gaps. First, let us assume that the clade \(v\) is ancestor of the clade \(u\). In this case
\[\lambda(u)-\lambda(v)=\sum_{n\in \gamma(u,v)} m(n)l(n,n^+).\]
Notice that within a sub-tree characterized by short internal branches (like a radiation event) and more or less homogeneous mass splits, the eigenvalues tend to be similar and collapse to a limiting value as we go down the tree. In this case a large gap would represent a strong asymmetry in the mass distribution or a large branch separating two events. Nevertheless, the deeper the clades, the less likely this becomes. \newline

In the case when \(u\) and \(v\) belong to different lineages we have 
\[\lambda(u)-\lambda(v)=\sum_{n\in \gamma(u,u\wedge v)} m(n)l(n,n^+)-\sum_{n\in \gamma(v,u \wedge v)} m(n)l(n,n^+).\]
Therefore, a strong eigenvalue gap may reflect a deep divergence event, characterized by possibly both, a long internal branch and/or a highly asymmetric mass distribution. From these observations, it is expected that two clusters with different rates of diversification will produce big gaps. As an example we compute the eigenvalue distribution of the phylogenetic ultrametric Laplacian attached to the phylogenetic tree of Primate genera with 109 leaves, obtained from Timetree \cite{Kumar2022}. 

\begin{figure}[H]
    \centering
    \includegraphics[width=0.9\linewidth]{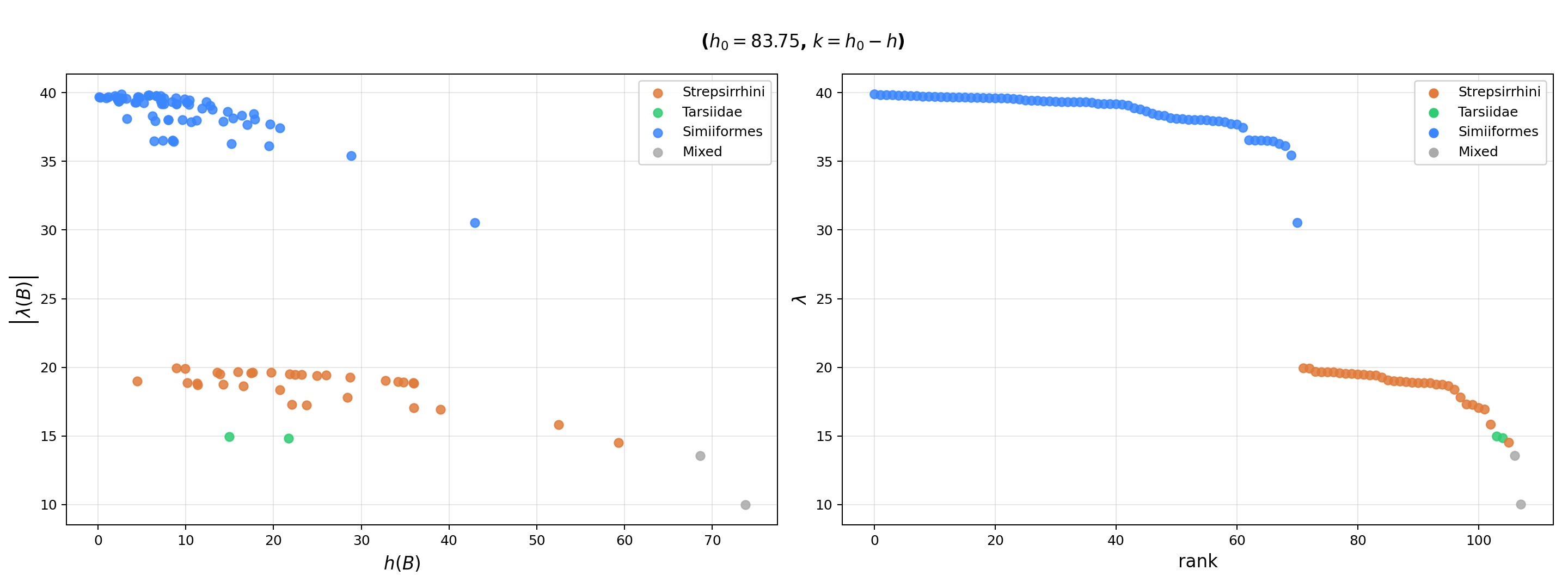}
    \caption{Eigenvalue distribution. Each eigenvalue has attached an internal node. The first plot (left) show the magnitude of each eigenvalue as function of the height. The second one (right) shows the eigenvalue distribution. A noticeable gap is shown in both plots. Each internal point has a color representing it belongs to the class Simiiformes, Tarsiidae (infraorders) or Strepsirrhini (order)   }
    \label{fig:eigenvaluesgapphylo}
\end{figure}
As shown in Figure \ref{fig:eigenvaluesgapphylo}, the gap separates successfully different evolutionary modes. First, left plot show how similar are eigenvalues inside an order (and infra-order) as expected from the theory,  several similar eigenvalues for different heights but different classes of clades. Moreover, we see from this plot that the Simiiformes infra-order have attached higher frequency eigenvalues which suggests a higher diversification in recent times and a higher mass in this clade, whereas the lower values of nodes in the Strepsirrhini order and Tarsiidae infra-order reflect older divergencies and smaller masses as we can see in Figure \ref{fig:primatesgenus}. In the right plot of Figure \ref{fig:eigenvaluesgapphylo}, we see how the eigen-gap separates the two most dominant splits of the phylogenetic tree: Strepsirrhini vs Simiiformes. 

\begin{figure}[H]
    \centering
    \includegraphics[width=0.8\linewidth]{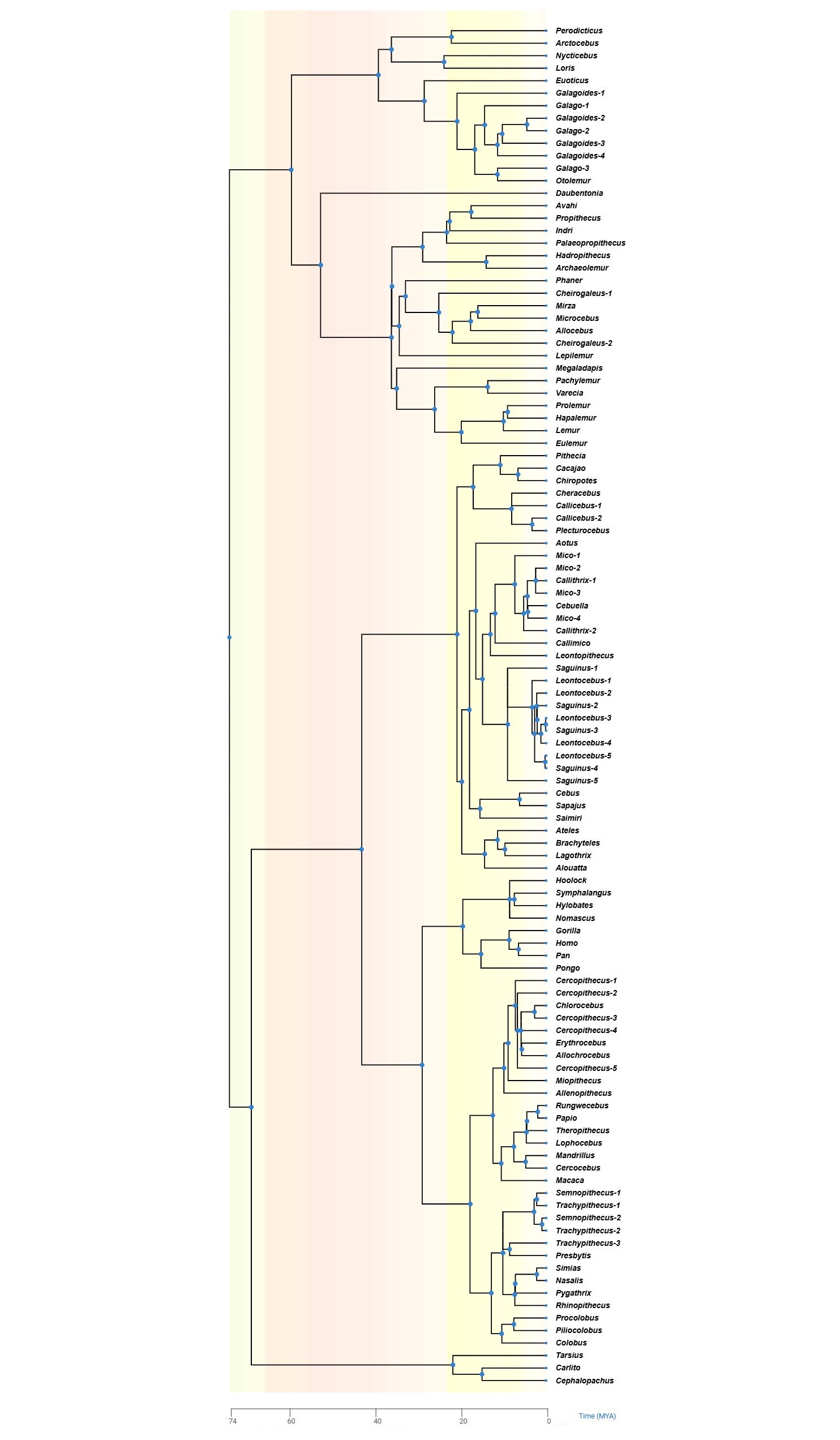}
    \caption{Phylogenetic tree of Primate genera from Timetree.}
    \label{fig:primatesgenus}
\end{figure}
The choice of kernel $F$ in the phylogenetic ultrametric Laplacian is not unique, and different kernels emphasize different scales of the phylogenetic hierarchy. As an illustration, 
consider the sigmoid kernel

\begin{equation}
    F(x) = \frac{1}{1+e^{-\beta(x-c)}},
\end{equation}

with $\beta = 0.3$ and $c = 35$. This kernel acts as a local contrast function, saturating for nodes far from the threshold $c$ and amplifying differences in the neighbourhood of $h_0 - h \approx c$. Figure \ref{fig:spectrum_sigmoid} shows that under this kernel, the eigenvalues associated with the two parvorders of Simiiformes, Platyrrhini and Catarrhini, become completely separated, with no overlap between the two spectral bands. This separation is not as evident with the 
kernel $k(x) = x$. The systematic study of how to select an optimal kernel for a given phylogenetic question, whether to maximize separation at a particular taxonomic scale, or to recover a target spectral structure, is a natural direction for future research.

\begin{figure}
    \centering
    \includegraphics[width=0.5\linewidth]{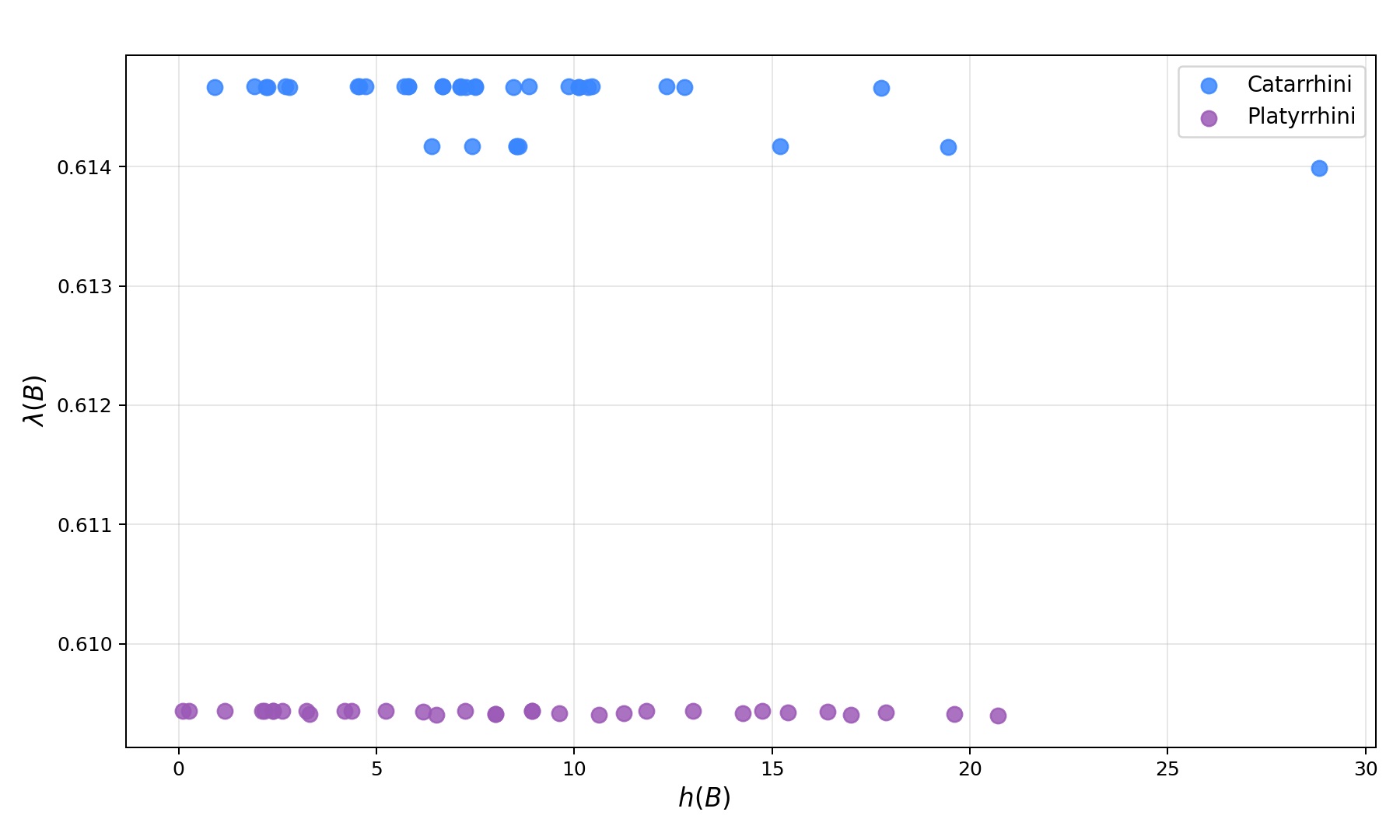}
    \caption{Gap obtained using a sigmoid kernel. }
    \label{fig:spectrum_sigmoid}
\end{figure}

\subsection{The eigenprojectors}
While the spectrum "hears" the geometry of the metric, the eigenprojectors will encode the underlying topology of the ultrametric space, in short, the topological tree can be reconstructed from the projectors/eigenvectors. Let us start by fixing an orthonormal basis \(\mathcal{B}_n\) of \(\mathcal{V}_n\) and denote by 
\[\mathcal{B}_X:=\bigcup_{n\in T\setminus X}\mathcal{B}_n\cup\{\psi_0\},\]
where \(\psi_0\equiv1\) is the trivial eigenvector of a given ultrametric Laplacian \(L_X\). Thus, the set \(\mathcal{B}_X\) is a fixed eigenvector basis of \(L_X\). As we already proved, the support of any eigenvector \(\psi\in \mathcal{V}_n\) is equal to the node \(n\). That is, for an ultrametric phylogenetic tree, each clade has an eigenspace associated with it. We can construct an explicit eigenbasis using the Gram-Schmidt process on the functions \(\varphi_{B_n,l}\). 
\begin{prop}
\label{prop:haarbasistrees}
     Fix an internal node $P$ with disjoint children $C_1,\dots,C_k$. Define
\[
m_r := m(C_r), \qquad s_j := \sum_{r=1}^j m_r.
\]

For $j=1,\dots,k-1$, we define
\[
\psi_{P,j}(x)=
\begin{cases}
\displaystyle
a_j := \sqrt{\dfrac{m_{j+1}}{s_j\,s_{j+1}}},
& x\in C_1\cup\cdots\cup C_j, \\[10pt]

\displaystyle
b_j := -\sqrt{\dfrac{s_j}{m_{j+1}\,s_{j+1}}},
& x\in C_{j+1}, \\[10pt]

0,
& x\in C_r \text{ with } r\ge j+2 \text{ or } x\notin P.
\end{cases}
\]
 Then the set of all functions \(\psi_{P,j}\) spans the space \(\mathcal{V}_n\).
 \end{prop}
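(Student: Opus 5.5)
The plan is to show three things: each $\psi_{P,j}$ lies in $\mathcal{V}_P$; the family $\{\psi_{P,j}\}_{j=1}^{k-1}$ is orthonormal in $L^2(X,m)$; and, since $\dim\mathcal{V}_P=|C(P)|-1=k-1$ (as noted before Proposition \ref{prop:eigenvalueofultrametricoperator}), an orthonormal family of $k-1$ vectors in $\mathcal{V}_P$ must span it. Here $\mathcal{V}_n$ in the statement is read as $\mathcal{V}_P$, with $P$ playing the role of $n$.

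\textbf{Membership.} By construction $\psi_{P,j}$ is constant on each child $C_r$ and vanishes outside $P$, so only the mean-zero condition needs checking:
\[
\sum_{r} m_r\,\psi_{P,j}|_{C_r}= s_j a_j + m_{j+1} b_j
= \sqrt{\tfrac{s_j m_{j+1}}{s_{j+1}}}-\sqrt{\tfrac{s_j m_{j+1}}{s_{j+1}}}=0 .
\]
Hence $\psi_{P,j}\in\mathcal{V}_P$. By Proposition \ref{prop:eigenvalueofultrametricoperator}, or directly because $\mathcal{V}_P$ is spanned by the $\varphi_{B_P,l}$, each $\psi_{P,j}$ is then an eigenfunction with eigenvalue $\lambda_P$.

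\textbf{Normalization.} Compute
\[
\|\psi_{P,j}\|^2 = s_j a_j^2 + m_{j+1} b_j^2 = \frac{m_{j+1}}{s_{j+1}}+\frac{s_j}{s_{j+1}} = 1,
\]
using $s_{j+1}=s_j+m_{j+1}$.

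\textbf{Orthogonality.} Take $i<j$. The support of $\psi_{P,i}$ is $C_1\cup\dots\cup C_{i+1}\subseteq C_1\cup\dots\cup C_j$. On that set $\psi_{P,j}$ equals the constant $a_j$, so
\[
\langle \psi_{P,i},\psi_{P,j}\rangle = a_j\sum_{r\le i+1} m_r\,\psi_{P,i}|_{C_r}=0
\]
by the mean-zero identity from the membership step. This nested-support Haar structure is the key observation. It also shows that the family is exactly what Gram--Schmidt produces from suitable combinations of the $\varphi_{B_P,l}$, though that is not needed.

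\textbf{Conclusion.} Orthonormal vectors are linearly independent, so $k-1$ of them in the $(k-1)$-dimensional space $\mathcal{V}_P$ form an orthonormal basis, and in particular they span $\mathcal{V}_P$.

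The only point needing care is the dimension count $\dim\mathcal{V}_P=k-1$. It is immediate: an element of $\mathcal{V}_P$ is determined by its $k$ constant values on $C_1,\dots,C_k$, and these satisfy one nontrivial linear constraint with positive coefficients $m_r$. I would include this one-line justification rather than cite it, since the earlier text only asserts it.
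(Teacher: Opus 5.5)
Your proposal is correct and follows the paper's proof essentially step for step. It uses membership via the mean-zero identity $s_j a_j + m_{j+1} b_j = 0$, orthogonality from $\psi_{P,j}\equiv a_j$ on the support $C_1\cup\cdots\cup C_{i+1}$ of $\psi_{P,i}$ for $i<j$, the normalization $s_j a_j^2 + m_{j+1} b_j^2 = 1$, and the count $\dim\mathcal{V}_P = k-1$ (one linear constraint on $k$ constant values) to conclude that the family is an orthonormal basis.
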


\begin{proof}
For each $j=1,\dots,k-1$, the function $\psi_{P,j}$ is supported in $P$ and is constant on each child $C_r$ of $P$. It remains to check that it has zero mean on $P$. Since $\psi_{P,j}$ takes the value $a_j$ on $C_1\cup\cdots\cup C_j$, the value $b_j$ on $C_{j+1}$, and vanishes on the remaining children, we get
\[
\sum_{y\in P} \psi_{P,j}(y)m(y)
=
a_j \sum_{r=1}^j m_r + b_j\,m_{j+1}
=
a_j s_j + b_j m_{j+1}.
\]
Using the definitions of $a_j$ and $b_j$,
\[
a_j s_j
=
s_j\sqrt{\frac{m_{j+1}}{s_j s_{j+1}}}
=
\sqrt{\frac{s_j m_{j+1}}{s_{j+1}}},
\]
and
\[
b_j m_{j+1}
=
-\,m_{j+1}\sqrt{\frac{s_j}{m_{j+1}s_{j+1}}}
=
-\sqrt{\frac{s_j m_{j+1}}{s_{j+1}}}.
\]
Hence
\[
\sum_{y\in P} \psi_{P,j}(y)m(y)=0,
\]
therefore $\psi_{P,j}\in \mathcal V_n$.

Next we prove orthogonality. Let $1\le i<j\le k-1$. Then $\psi_{P,i}$ is constant on $C_1\cup\cdots\cup C_i$, constant on $C_{i+1}$, and zero on $C_r$ for $r\ge i+2$. Since $j\ge i+1$, the function $\psi_{P,j}$ takes the constant value $a_j$ on every child $C_1,\dots,C_{i+1}$. Therefore,
\[
\langle \psi_{P,i},\psi_{P,j}\rangle
=
a_j\sum_{y\in P} \psi_{P,i}(y)m(y)
=
0,
\]
because $\sum_{y\in P} \psi_{P,i}(y)m(y)=0$. Thus the family $\{\psi_{P,j}\}_{j=1}^{k-1}$ is orthogonal.

We now compute the norm of each $\psi_{P,j}$:
\[
\|\psi_{P,j}\|_{L^2(m)}^2
=
a_j^2 \sum_{r=1}^j m_r + b_j^2 m_{j+1}
=
a_j^2 s_j + b_j^2 m_{j+1}.
\]
Substituting the values of $a_j$ and $b_j$ gives
\[
a_j^2 s_j
=
\frac{m_{j+1}}{s_j s_{j+1}}\, s_j
=
\frac{m_{j+1}}{s_{j+1}},
\]
and
\[
b_j^2 m_{j+1}
=
\frac{s_j}{m_{j+1}s_{j+1}}\, m_{j+1}
=
\frac{s_j}{s_{j+1}}.
\]
Hence
\[
\|\psi_{P,j}\|_{L^2(m)}^2
=
\frac{m_{j+1}}{s_{j+1}}+\frac{s_j}{s_{j+1}}
=
\frac{s_{j+1}}{s_{j+1}}
=
1.
\]
So the family is orthonormal.

Finally, any function in $\mathcal V_n$ is determined by its constant values on the $k$ children $C_1,\dots,C_k$, subject to the single linear relation
\[
\sum_{r=1}^k c_r\,m_r=0.
\]
Therefore,
\[
\dim \mathcal V_n = k-1.
\]
Since $\{\psi_{P,j}\}_{j=1}^{k-1}$ is an orthonormal family of $k-1$ elements contained in $\mathcal V_n$, it is an orthonormal basis of $\mathcal V_n$. In particular, it spans $\mathcal V_n$.
\end{proof}
This basis is a generalization of the basis presented in \cite{Gorman2023} which is recovered in the case of a binary tree with the counting measure. In that work, the eigenbasis is introduced as an adaptation for a more general framework for wavelets in trees. Nevertheless, here they appear naturally as eigenvectors of the ultrametric Laplacian. In \cite{Gorman2023} this basis is used for the sparsification of huge ultrametric matrices and therefore, is proposed as a method for storing big ultrametric trees like the tree of life. \newline

We now specialize to the binary case. We assume that \(m\) is a multiple of the counting measure and the tree is binary. Let \(f:X\rightarrow \mathbb{R}\) be a function defined on the leaves of the tree, which in biological applications corresponds to a trait measured across species. Then this function can be decomposed in eigen-modes via the ultrametric basis derived in Proposition \ref{prop:haarbasistrees}. 
\[f=f_0+\sum_{P}c_P\psi_P,\]
where \(f_0=\sum_{x\in X}f(x)m(x)\) and 
\[c_P=\langle f,\psi_P\rangle.\]
The \(c_P\) are closely related with the variance of \(f\) with respect to the measure \(m\):
\[Var_m(f):=||f-\bar f||^2=\sum_{x}m(x)(f(x)-\bar f)^2=\sum_{P}c_P^2.\]
And, explicitly
\[c_P=\sqrt{\frac{m(C_1)m(C_2)}{m(P)}}(\bar f_{C_1}-\bar f_{C_2}),\]
where 
\[\bar f_{C_i}=\frac{1}{m(C_i)}\sum_{x\in C_i}m(x)f(x),\]
for \(i\in \{1,2\}\). 

Therefore, each eigenmode is proportional to the contrast associated with the divergence \(P\), that is, the difference between the averages of the traits in the split generated by the sub clades \(C_1\) and \(C_2\). The proportionality factor is \(\sqrt{\frac{m(C_1)m(C_2)}{m(P)}}\) which penalizes asymmetries in the split. Substituting into the variance identity we obtain
\[Var_m(f)=\sum_{P}\frac{m(C_1)m(C_2)}{m(P)}(\bar f_{C_1}-\bar f_{C_2})^2.\]
Moreover, each summand can be expressed as
\[c_P^2=m(C_1)(\bar f_{C_1}-\bar f_P)^2+m(C_2)(\bar f_{C_2}-\bar f_P)^2\]
where 
\[\bar f_P=\frac{m(C_1)\bar f_{C_1}+m(C_2)\bar f_{C_2}}{m(P)},\]
hence \(c_P^2\) measures the between-group variance, i.e. how much the two groups differ with respect to the average in the clade \(P\), hence the total variance \(Var_m(f)\) is decomposed into variances between the splits generated by the phylogenetic tree.  \newline

Thus, we expect the largest contributions to \(Var_m(f)\) to arise from splits where substantial differences of the trait occur between two clades of comparable mass. This suggests using the coefficients \(c_P\) as a natural framework for phylogenetic comparison, where differences between traits are analyzed through the orthogonal contrasts associated with the divergence events of the tree. As an example, let us analyze three traits on the phylogenetic tree of Primate genera. The information for these traits was obtained from the PanTHERIA dataset \cite{Pantheria}.

\begin{figure}[H]
    \centering
    \includegraphics[width=0.75\linewidth]{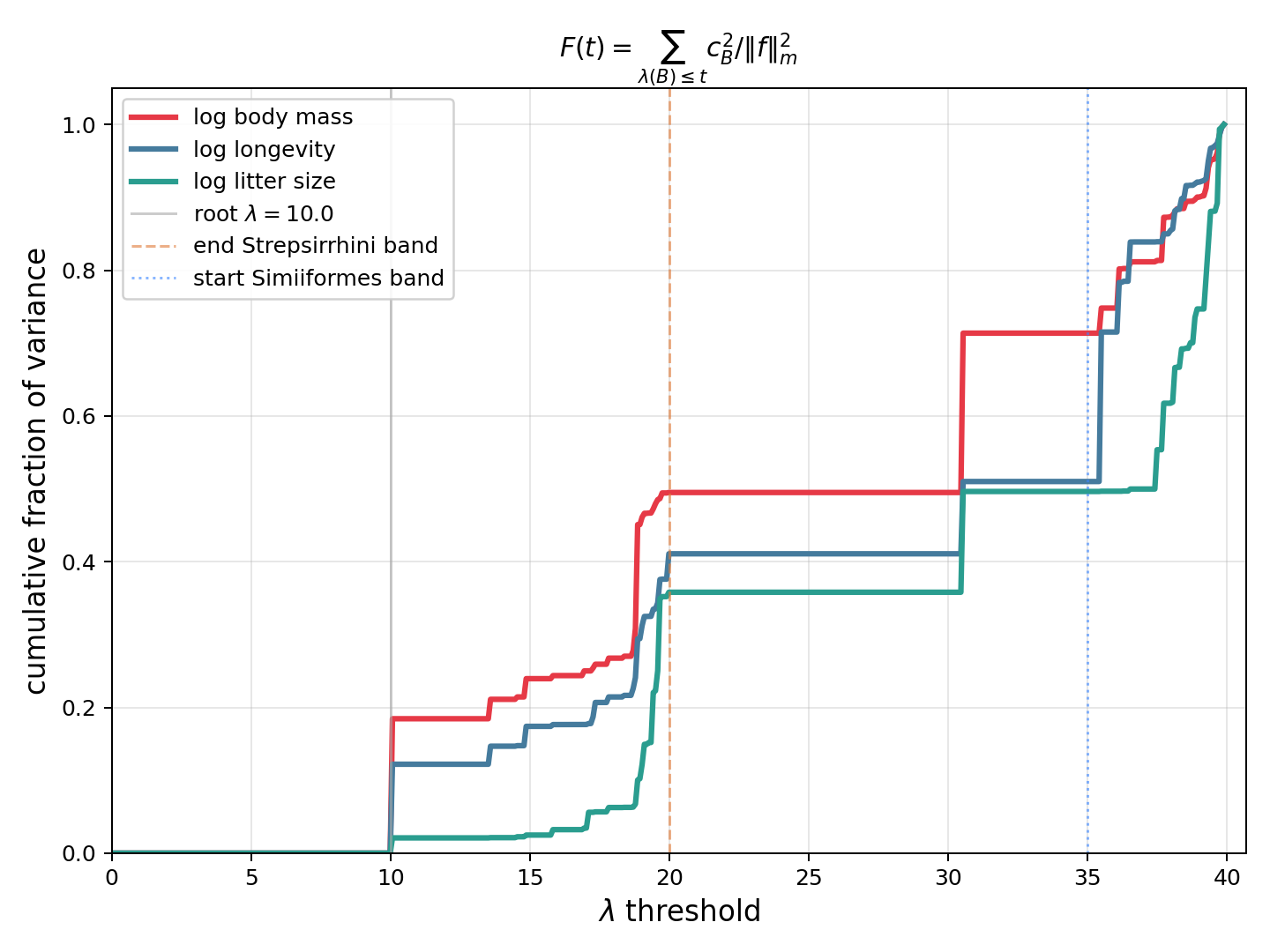}
    \caption{Phylogenetic cumulative variance function of the traits: body mass, longevity and litter size.}
    \label{fig:spectral_variance}
\end{figure} 

Figure \ref{fig:spectral_variance} illustrates the phylogenetic cumulative fraction of variance
of three life-history traits measured across primate genera: log body mass, log
maximum longevity, and log litter size. For each internal node $B$, the squared
Fourier coefficient $c_B^2 = \langle f, \psi_B \rangle_m^2$ quantifies
the fraction of total trait variance explained by the contrast between the two
child clades of $B$. The cumulative function

\begin{equation}
    F(t) = \frac{\sum_{\lambda(B) \leq t} c_B^2}{\|f\|_m^2}
\end{equation}

aggregates these contributions by eigenvalue.

The three traits exhibit different profiles. For log body mass approximately $18.5\%$ of its variance is explained by the root node alone
($\lambda \approx 10$), reflecting the large difference in mean body size
between Strepsirrhini and Haplorhini. A secondary contribution arises within
the Strepsirrhini band ($\lambda \in [18, 20]$), accounting for a further
$22.7\%$ of variance through internal contrasts. However, the
dominant contributor to body mass variance is the node separating Platyrrhini
from Catarrhini ($\lambda \approx 30.5$, $h \approx 43$ Ma), which alone
explains $21.9\%$ of total variance. This split distinguishes the New World
monkeys from the Old World monkeys and apes.
The contrast between the mean body mass of these two
assemblages, weighted by the mass distribution $m$, is the single largest
source of body mass variation across the primate tree. By $\lambda = 35$,
approximately $71\%$ of body mass variance is accounted for.

Log maximum longevity presents a qualitatively different pattern, with $49\%$
of its variance concentrated in the Simiiformes band ($\lambda \in [35, 40]$).
The dominant node is the split between Hominoidea and Cercopithecidae
($\lambda = 35.4$, $h \approx 29$ Ma), which alone explains $20.5\%$ of total
longevity variance, reflecting the substantially greater maximum lifespan of
great apes relative to Old World monkeys. The root contributes only $12.2\%$,
indicating that the Strepsirrhini--Haplorhini divergence is less informative
for longevity than for body mass.

Log litter size exhibits the flattest cumulative profile. 
The root contributes only $2.1\%$ of variance, meaning
that clade membership at the deepest level is nearly uninformative about litter
size. Instead, variance is spread across multiple bands, with the largest single
contributor being the Platyrrhini--Catarrhini split ($\lambda \approx 30.5$,
$13.8\%$). Overall, litter size variance is distributed more uniformly across eigenvalue bands than either body mass or longevity.

Taken together, the three spectral profiles illustrate that the ultrametric
Laplacian decomposition recovers biologically meaningful structure: traits
with strong ancient signal (body mass) accumulate variance rapidly at low
$\lambda$, while traits shaped by more recent or lineage-specific evolution
(longevity, litter size) show flatter profiles with variance concentrated at
higher eigenvalues.

\subsection{The heat kernel of an ultrametric Laplacian} \label{chapter2}
In this section we study the heat kernel associated to an ultrametric Laplacian. We approach this concept from the perspective of the system of differential equations naturally generated by the Laplacian matrix, that is, from the point of view of a master equation or a continuous-time Markov chain (CTMC), therefore we are interested in the following initial value problem or \emph{Cauchy problem} :

Let $(X,d)$ be a finite ultrametric space, and let $L$ and $L_X$ denote the associated ultrametric Laplacian matrix and operator, respectively. For a function $u : X \to \mathbb{R}$, the \emph{evolution} \(u(\cdot,t)\) is governed by
\[
\frac{d}{dt} u(t,x) = (L_X u(t,\cdot))(x),
\]
with initial condition $u(0,x) = u_0(x)$. Equivalently, in matrix form this can be written as
\[
\frac{d}{dt} u(t) = L u(t), \qquad u(0) = u_0.
\]

By a straightforward computation, one can obtain the general solution of the Cauchy problem by using the spectral structure of the operator $L_X$. In particular, let $\{\lambda_n\}$ denote the spectrum of $L_X$ and let $\{\psi_n\}$ be an eigenbasis of $L^2(X,m)$ as introduced previously. Expanding the initial condition in this basis, we write
\[
u_0 = \sum_{n} \langle u_0, \psi_n \rangle_{L^2(X,\mu)} \, \psi_n.
\]
The above equation is also called \emph{The heat equation} or \emph{The diffusion equation}. Substituting this expansion into the evolution equation yields the explicit solution
\[
u(t) = \sum_{n} e^{\lambda_n t} \, \langle u_0, \psi_n \rangle_{L^2(X,m)} \, \psi_n.
\]
Equivalently, for the initial data $u(0) = u_0$, one has
\[
u(t) = e^{tL} u_0,
\]
where $\{e^{tL}\}_{t \geq 0}$ denotes the semigroup generated by the ultrametric Laplacian matrix $L$. Equivalently, in terms of the operator $L_X$ this reads
\[
u(x,t) = e^{tL_X} u_0(x).
\]
This semigroup formulation naturally leads to the notion of the heat kernel associated with \(L_X\).

\begin{definition}
Let $(X,d,m)$ be a measurable finite ultrametric space and $L_X$ the corresponding ultrametric Laplacian operator. The function
\[
H_t : (0,\infty) \times X \times X \longrightarrow \mathbb{R},
\]
satisfying that for every function $u : X \to \mathbb{R}$ and every $t > 0$,
\[
(e^{tL_X} u)(x) = \sum_{y\in X} H_t(x,y)\, u(y)\, m(y),
\]
is called the \emph{heat kernel} of $L_X$.
\end{definition}
From the definition it follows that the heat kernel is completely determined by the spectrum of $L_X$ and its associated orthonormal eigenbasis. Indeed, if $\{\lambda_n\}$ denotes the spectrum of $L_X$ and $\{\psi_n\}$ an orthonormal eigenbasis of $L^2(X,m)$, then
\begin{equation}\label{heatkernelgen}
  H_t(x,y) = \sum_{n} e^{\lambda_n t} \, \psi_n(x) \, \overline{\psi_n(y)}.
\end{equation}
Every eigenvalue \(\lambda_n\) is associated with an internal node \(n\in T\). Therefore, we can write the contribution of this eigenvalue and the corresponding eigenfunctions \(\mathcal{B}_n\) as follows: 
\[H^{(n)}_t(x,y):=e^{\lambda_n t} \sum_{\psi\in \mathcal{B}_n}\psi(x)\psi(y).\]
Let \(E_n(x,y)\) the projection kernel to the eigen-space \(\mathcal{V}_n\) , that is, for each \(f\in L^2(m)\),
\[\sum_{y\in X} E_n(x,y)f(y)m(y)-f(x) \, \perp \, \mathcal{V}_n.\]
We have
\[E_n(x,y)=\sum_{\psi\in \mathcal{B}_n}\psi(x)\psi(y). \]
In particular for \(\delta_z\) where \(z\in B_j\subset B_n\) and \(j\in C(n)\), 
\[E(x,z)m(z)=\sum_{y\in X} E_n(x,y)\delta_z(y)m(y)=m(z)\left(\frac{\textbf{1}_{B_j}}{m(B_j)}(x)-\frac{\textbf{1}_{B_{n}}}{m(B_{n})}(x)\right)=m(z)\varphi_{B_j}(x),\]
where the last equality follows from the fact that \(m(z)\varphi_{B_j}-\delta_z \, \perp \, \varphi_{B_l}\), for all \(l\in C(n)\). Consequently 
\begin{equation}\label{projectionkernel}
    E_n(x,z)=\sum_{\psi\in \mathcal{B}_n}\psi(x)\psi(z)=\varphi_{z\in B_j}(x).
\end{equation}
We now show that the matrix $H_t = (H_t(x,y))_{x,y \in X}$ inherits a block structure by the topological tree of $X$. More precisely, each internal node of the topological tree corresponds to a block of $H_t$, and the entries within each block are constant for fixed \(t>0\).

\begin{thm}
Let $(X,d,m)$ be a measurable finite ultrametric space with associated topological tree $T$ and heat kernel $H_t = (H_t(x,y))_{x,y \in X}$. Then for each internal node $n \in T$ the subset of indices corresponding to the children of $n$ determines a block of the matrix $H_t$. Moreover, each such block has constant entries, that is, for a given \(t>0\),
\begin{equation} \label{eq:heatkernelteorem}
    H_t(x,y)=1+e^{\lambda(B_n) t }\left(\frac{\delta_{xy}}{m(\{x\})}-\frac{1}{m(B_n)}\right)+\sum_{T:B_n\subseteq T \subsetneq X} e^{\lambda(T^{+}) t}\left(\frac{1}{m(T)}-\frac{1}{m(T^{+})}\right),
\end{equation}
where \(B_n=  x\wedge y \).
\end{thm}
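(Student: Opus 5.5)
Starting from the spectral representation \eqref{heatkernelgen}, I will group the eigenvectors by internal node. The trivial eigenvector $\psi_0\equiv 1$ (normalised in $L^2(X,m)$ when $m$ is a probability measure) contributes the constant $1$. Each internal node $n'$ contributes $e^{\lambda_{n'}t}E_{n'}(x,y)$, where $E_{n'}$ is the projection kernel onto $\mathcal{V}_{n'}$. This gives
\[
H_t(x,y)=1+\sum_{n'\in T\setminus X} e^{\lambda_{n'} t}E_{n'}(x,y).
\]
Then I will evaluate each $E_{n'}(x,y)$ using \eqref{projectionkernel}. Since every $\psi\in\mathcal{V}_{n'}$ is supported in $B_{n'}$, the term vanishes unless both $x,y\in B_{n'}$, i.e.\ unless $n'$ is an ancestor of (or equal to) $x\wedge y=B_n$. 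So only nodes on the history $\gamma_r(n)$ survive.

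For such nodes there are two cases. If $n'$ is a strict ancestor of $n$, then $x$ and $y$ lie in the same child $T$ of $n'$ (so $T^{+}=B_{n'}$), and \eqref{projectionkernel} gives
\[
E_{n'}(x,y)=\frac{\mathbf 1_{T}(x)}{m(T)}-\frac{1}{m(T^{+})}=\frac{1}{m(T)}-\frac{1}{m(T^{+})}.
\]
As $n'$ runs over the strict ancestors of $n$, $T$ runs over exactly the balls with $B_n\subseteq T\subsetneq X$. This produces the final sum in \eqref{eq:heatkernelteorem}, with eigenvalue $\lambda(T^{+})$. If $n'=n$ and $x\neq y$, then $x$ and $y$ lie in different children of $n$, and \eqref{projectionkernel} gives $E_n(x,y)=-1/m(B_n)$. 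This matches the middle term, since $\delta_{xy}=0$.

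The diagonal case $x=y$ needs separate care, and I expect it to be the main obstacle. Here $x\wedge x$ is the leaf itself, and the formula's term $\delta_{xy}/m(\{x\})$ should be read through the convention that $B_n=\{x\}$ plays the role of the node. I would verify it either by using completeness, $\sum_{n'}E_{n'}(x,y)+1=\delta_{xy}/m(x)$, which holds because $\sum_y(\cdot)f(y)m(y)$ must return $f$, or by the telescoping identity
\[
\frac{1}{m(\{x\})}-\frac{1}{m(B_{F(x)})}=E_{F(x)}(x,x)
\]
with the chain of terms above. The delicate point is to reconcile the indexing: which ball carries $e^{\lambda(B_n)t}$, and whether the sum over $T$ starts at $B_n$ or at its child. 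I will fix this by checking the case $x=y$ against the completeness identity at $t=0$, where $H_0(x,y)=\delta_{xy}/m(x)$ forces the telescoping to close.

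Finally, the block statement follows at once. The right-hand side depends on $(x,y)$ only through $x\wedge y$ when $x\neq y$. So for distinct children $B_j,B_{j'}$ of $n$, the submatrix indexed by $B_j\times B_{j'}$ is constant. These blocks, nested according to the tree, tile the matrix $H_t$ off the diagonal.
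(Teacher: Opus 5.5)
Your proposal is correct and matches the paper's proof: you group the spectral expansion by internal node, evaluate each projection kernel along the history of $x\wedge y$ via equation \ref{projectionkernel}, and add the trivial eigenvector's constant $1$, which, as you note, requires $m(X)=1$. Your worry about the diagonal indexing goes away once you observe that reading $B_n=\{x\}$ (middle term zero, sum starting at $T=\{x\}$) and reading $B_n=B_{F(x)}$ (as the paper later does for the centrality) give the same value, because the $T=\{x\}$ summand is exactly $e^{\lambda_{F(x)}t}\bigl(1/m(\{x\})-1/m(B_{F(x)})\bigr)$.
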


\begin{proof}
Let \(n\in T\) be an internal node. Let \(x,y\in X\) such that \(x\wedge y=n\). Therefore, there exists two disjoint balls \(B_i,B_j\) with \(i,j\in C(n)\) such that \(x\in B_i\) and \(y\in B_j\). By equation \ref{projectionkernel}:
\[E_n(x,y)=\varphi_{y\in B_j}(x)=-\frac{1}{m(B_n)}.\]
On the other hand, let \(j\in \gamma_r(n)\setminus\{r\}\), then \(x,y\in B_j\), therefore
\[E(x,y)_{F(j)}=\varphi_{y\in B_j}(x)=\left(\frac{1}{m(B_j)}-\frac{1}{m(B_{F(j)})}\right).\]
By summing along the history of \(n\) and including the contribution of the trivial eigenvector, we obtain the desired equality.
\end{proof}
From the semigroup structure we have the following asymptotic formula, for \(x,y\in X\),
\[
H_t(x,y)m(y)=\delta_{xy}+t\,L_{xy}+\frac{t^2}{2}\,(L_{xy}^2)(x,y)+O(t^3)\quad (t\downarrow 0).
\]
If \(x\in B_l\), \(y\in B_m\) with \(l\neq m\) and \(n=x\wedge y\), by expanding the exponential functions in expansion \ref{eq:heatkernelteorem} and comparing it with the above equation, we can write the matrix \(k(d(x,y))\) in terms of the eigenvalues and the measure: 
\begin{equation} \label{kerneleigenvalueequation}
    k(d(x,y))=-\frac{\lambda(B_n)}{m(B_n)}+\sum_{T:B_n\subseteq T \subsetneq X } \left(\frac{1}{m(T)}- \frac{1}{m(T^+)}\right)\lambda(T^{+}).
\end{equation}
We now focus on the long-time behavior of the heat kernel. Let \(\pi:X\times X \rightarrow \mathbb{R}\) defined as \(\pi(x,y)\equiv 1\). This kernel has attached an operator of the form \[\Pi u(x)=\sum_{y\in X} \pi(x,y)u(y)m(y)=\sum_{y\in X}u(y)m(y).\]
Therefore the heat kernel can be write as
\begin{equation*}
  H_t(x,y)=\pi(x,y)+\sum_{n} e^{\lambda_n t} \psi_n(x)\psi_n(y),
\end{equation*}
where \(\psi_n\) is an eigenvector attached to the internal node \(n\). Let \(f\in L^2(X,\mu)\), with the expansion \(f=f_0 1_X(x)+\sum f_n \psi_n \). Therefore,
\[||(e^{tL_X}-\Pi)f||_{L^2}^2=\sum_n e^{2\lambda_n t} f_n^2 \leq e^{2\lambda_{gap}t}||f||_{L^2},\] where \(\lambda_{gap}\) is the eigenvalue with minimal absolute value. Therefore, \(||e^{tL_X}-\Pi||_{L^2}\leq e^{\lambda_{gap} t}\). Moreover, if \(\psi_n\) is a Kozyrev wavelet attached to \(\lambda_{gap}\), then \(||(e^{tL_X}-\Pi)\psi_n||_{L^2}=e^{\lambda_{gap} t}\), hence
\begin{equation}\label{L2normheatkernel}
  ||(e^{tL_X}-\Pi)||_{L^2}=e^{\lambda_{gap}t}. 
\end{equation}
Therefore, the decaying rate of the heat kernel is controlled by the \emph{spectral gap} \(|\lambda_{gap}|\). 

Moreover, since  for every function $u : X \to \mathbb{R}$ and every $t > 0$,
\[
(e^{tL_X} u)(x) = \sum_{y\in X} H_t(x,y)\, u(y)\, m(y),
\]
we have that 
\begin{equation}
    \label{ergodicitylimitequation}
    \lim_{t\rightarrow\infty}e^{tL_X}u(x)=\sum_{y\in X}u(y)m(y).
\end{equation}
Although the matrix of \(L_X\) respect the canonical basis is not symmetric unless \(m\) is uniform, \(L_X\) is a self-adjoint operator in \(L^2(X,m)\). Equivalently, the ultrametric Laplacian matrix \(L\) satisfies the \emph{detailed balance condition}, namely 
\[m(x)L_{xy}=m(y)L_{yx}.\]
The self adjoint-ness of \(L_X\) and equation \ref{ergodicitylimitequation} implies the following identity.
\begin{equation}
    \label{eq:invariantmeasure} 
    \sum_{y\in X}e^{tL_X}u(y)m(y)=\sum_{y\in X}u(y)m(y), \quad\forall \ t>0
\end{equation}
Indeed, by taking the derivative respect time of the LHS, we obtain
\[\sum_{y\in X}L_X(e^{tL_X}u(y))m(y)=\langle L_X(e^{tL_X}u),\textbf{1}\rangle_{L^2(m)}= \langle e^{tL_X}u,L_X\textbf{1}\rangle_{L^2(m)}=0,\]
where last equality holds since \(\textbf{1}\) is the trivial eigenvector of \(L_X\) with eigenvalue zero. This identities will have a clear probabilistic meaning in the next section. Since the rate of convergence of the semigroup to its limit for \(t\rightarrow \infty\) depend on the spectral gap that is the first non zero eigenvalue we close this section with the following proposition which characterize the spectral gap.

\begin{proposition}
  If \(k\) is decreasing,  
\[|\lambda_{gap}|=k(diam(X)).\]
\end{proposition}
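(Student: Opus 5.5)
The plan is to compare the closed-form eigenvalues of Proposition \ref{prop:eigenvalueofultrametricoperator} node by node with the eigenvalue of the root. Throughout I assume, as in the heat-kernel discussion, that $m$ is a probability measure ($m(X)=1$) and that $k$ is positive. By that proposition every nontrivial eigenvalue is attached to an internal node $n$ and is given by
\[
\lambda_n=-\sum_{y\in X\setminus B_n} k(d(x_0,y))\,m(y)-k(\diam(B_n))\,m(B_n),\qquad x_0\in B_n .
\]
Since $k>0$, each $\lambda_n$ is strictly negative. Hence
\[
|\lambda_n|=\sum_{y\in X\setminus B_n} k(d(x_0,y))\,m(y)+k(\diam(B_n))\,m(B_n),
\]
and $\lambda_{gap}$ is the $\lambda_n$ of smallest such sum. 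The trivial eigenvalue $0$ of $\psi_0$ is excluded, as in the definition of the spectral gap.

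First I evaluate the root. For $n=r$ we have $B_r=X$, so the sum over $X\setminus B_r$ is empty. Therefore $\lambda_r=-k(\diam(X))\,m(X)=-k(\diam(X))$.

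Next I bound every other node from below. Every distance occurring in the formula is at most $\diam(X)$. This covers $d(x_0,y)$ for $y\notin B_n$, and also $\diam(B_n)\le \diam(X)$. Since $k$ is decreasing, each kernel value is therefore at least $k(\diam(X))$. This gives
\[
|\lambda_n|\ \ge\ k(\diam(X))\big(m(X\setminus B_n)+m(B_n)\big)=k(\diam(X))=|\lambda_r|.
\]
So the root attains the minimum absolute value among the nontrivial eigenvalues, which gives $|\lambda_{gap}|=k(\diam(X))$.

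The argument is essentially a one-line monotonicity estimate, so I do not expect a real obstacle. The only points needing care are the standing conventions. The first is positivity of $k$, which guarantees that all $\lambda_n$ have the same sign, so that absolute values can be taken termwise and no eigenvalue vanishes. The second is the normalisation $m(X)=1$. If $m$ is a general finite measure, the same argument gives $|\lambda_{gap}|=k(\diam(X))\,m(X)$, and I would record this in a remark.
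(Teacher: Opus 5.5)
Your proof is correct, but it takes a different route from the paper's. The paper compares an internal node with its parent $n$. For a child $c$ of $n$ and $x_0\in B_c$, it splits $X\setminus B_c=(X\setminus B_n)\cup(B_n\setminus B_c)$ and uses that $k(d(x_0,y))=k(\diam(B_n))$ for $y\in B_n\setminus B_c$. This gives $\lambda_c-\lambda_n=m(B_c)\bigl(k(\diam(B_n))-k(\diam(B_c))\bigr)\le 0$. So the eigenvalues decrease along every root-to-leaf path, the root's eigenvalue is the largest (closest to zero), and $|\lambda_{gap}|=|\lambda_r|=k(\diam(X))$.

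You instead bound every $|\lambda_n|$ below by $k(\diam(X))\,m(X)$ in a single step. You only use that every kernel value appearing in the closed-form eigenvalue is at least $k(\diam(X))$.

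The paper's argument gives a stronger structural fact: $\lambda$ is monotone along ancestry, with an explicit increment. That fact underlies the later discussion of gaps $\lambda(u)-\lambda(v)$ within a lineage. Yours does not give this, but it is shorter, which buys two things:
\begin{enumerate}
\item It avoids the exact parent–child identity. The paper states that identity for an arbitrary ``proper descendant'', but it holds with equality only for a child.
\item It shows the hypothesis can be weakened. It suffices that $k(\diam(X))>0$ and that $k(t)\ge k(\diam(X))$ for every distance $t$ realized in $X$; full monotonicity is not needed.
\end{enumerate}
Your remark that for a general finite measure the conclusion becomes $|\lambda_{gap}|=k(\diam(X))\,m(X)$ is also correct, and it makes explicit the normalization the paper leaves implicit.
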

\begin{proof}
  Let \(n\in T\) be an internal node. Let \(m\in T\) a proper descendant. Then \(X\setminus 
  B_n\subset X\setminus 
  B_m\). Let \(x_0\in B_m\), then
  \[\sum_{X\setminus B_m} k(d(x_0,y))d\mu(y)=\sum_{X\setminus B_n} k(d(x_0,y))d\mu(y)+k(\diam(B_n))(\mu(B_)).\]
Therefore by equation \ref{eigenvaluereal} we obtain
\[\lambda_m-\lambda_n=\mu(B_m)(k(\diam(B_n))-k(\diam(B_u))).\]
If \(k\) is a decreasing function and since \(\diam(B_m)\leq \diam(B_n)\), then \(\lambda_m \leq \lambda_n<0\). As a consequence, \(\lambda_{gap}=\lambda_r\), where \(r\in T\) is the root. 
\end{proof}

\section{Dynamic centrality and ultrametric spaces.} \label{sec:centrality}
\subsection{A state-centrality index for CTMC.}
Several complex systems can be described through stochastic dynamics evolving on a large configuration space.
Such systems typically explore a vast number of states before approaching equilibrium. 
In statistical physics, this behavior is often interpreted through the notion of an underlying energy landscape, where each configuration is associated with a potential energy and the dynamics describe random transitions or jumps between metastable states. Under suitable assumptions, the resulting dynamics satisfy the Markov property and can be modeled as a CTMC \cite{Peliti2021Stochastic}.

While this framework arises naturally in non-equilibrium statistical mechanics, similar Markovian descriptions appear in broader contexts where the state space exhibits an intrinsic hierarchical organization. 
In such situations, the geometry of the configuration space plays a fundamental role in shaping the stochastic evolution. 
Rather than focusing on a specific physical model, we concentrate here on the structural properties of ultrametric state spaces and investigate how continuous-time Markov dynamics reflect their hierarchical organization.

In particular, this perspective motivates the study of geometric quantities associated with the generator of the dynamics, such as state-centrality indices, which quantify the structural relevance of individual states independently of a specific physical interpretation.
Therefore the continuous time stochastic process $X_t$, where $t>0$ and $X_t\in S$, where $S$ is the finite state space (the space of configurations), is assumed to satisfy the Markov property (and therefore generates a  CTMC) $\mathbb{P}(X_{t}:X_{t_{n}})=\mathbb{P}(X_{t}:X_{t_{n}},X_{t_{n-1}},...,X_{t_{0}})$. During the evolution of the dynamics, given any two states $i,j\in S$, we are interested in describing the transition probabilities $p_{i,j}(t)$, describing the probability of jumping to state $j$ at time $t$, given that $X_0=i$. Using the Markov property and the law of total probability, it is possible to describe this probability function in terms of the so called master equation
\begin{equation} \label{mastereq}
    \frac{d}{dt}P(t)=QP(t),
\end{equation}
where $p_{ij}(t)=(P(t))_{ij}$ and $q_{ij} = (Q)_{ij}=:\frac{d}{dt} p_{ij}(t) |_{t=0}$. The numbers $q_{ij}$ are called the transition rates of the Markov process. The transition rates $q_{ij}$ determine the infinitesimal behavior of the conditional probabilities by $\mathbb{P}(X_{t+h}=j:X_{t}=i)=\delta_{ij} + q_{ij}h+o(h)$. \newline

The general solution of equation \ref{mastereq}, is given by the attached semigroup $P(t)=\exp(tQ)$. One important conclusion from this description is that when the state space $S$ is finite, the CTMC is determined by its transition matrix $Q$. Therefore, is usually convenient to describe the process via its Transition Network (or Jump Network), see \cite{Peliti2021Stochastic}. 

This network is constituted by the state space \(S\) as the set of vertices and the directed edges are weighted by the corresponding rated $q_{ij}$, whenever this are positive, no directed edge is given when the rate transition is zero. We say that a transition network is connected if any state $j$ can be reached from any other state $i$. This assumption is usually made in transition rate phenomena (chemical reactions, biomolecules, etc) since disconnected master equations represent multiple non interacting physical systems. From this always follow the relaxation of the system toward an equilibrium state, that is, a stationary distribution exists. We denote this stationary distribution by $\pi=(\pi(i))_{i\in S} $. Therefore we have $\lim_{t\rightarrow \infty}p_{ij}(t)=\pi_j$. In many thermodynamic systems, it is common to assume a stronger condition called the detailed balance condition given by the relation
\begin{equation} \label{detailedbalance}
    \pi(i)q_{ij}=q_{ji} \pi(j),
\end{equation}
it can be shown that condition \ref{detailedbalance} holds if and only if 
\begin{equation}
\label{balanceprob}
    \pi(i)p_{ij}(t)=p_{ji}(t)\pi(j).
\end{equation}
for every $t\geq0$. 

\subsection*{Random walk centrality}
In \cite{Nohcentrality2004}, Noh and Rieger introduce the concept of random walk centrality, which measures the capacity of a node of a network to receive and redistribute information. Next, we briefly review the definition of this index based on a discrete-time random walk in a network setting. \newline

Let \( G = (V,E) \) be a finite, connected, undirected graph with adjacency matrix \(A\). Denote by
\[
k_i = \sum_{j \in V} A_{ij}
\]
the degree of node \(i\).

A discrete-time random walk on \(G\) is a Markov chain \((X_t)_{t \ge 0}\) with state space \(V\) and transition probabilities
\[
P_{ij} = \mathbb{P}(X_{t+1}=j \mid X_t=i) = \frac{A_{ij}}{k_i}.
\]
That is, at each time step, a walker located at node \(i\) chooses uniformly at random one of its neighbors and moves to it.

Since the graph is connected and undirected, the Markov chain admits a unique stationary distribution given by
\[
\pi_i = \frac{k_i}{\sum_{j \in V} k_j}.
\]
The characteristic relaxation time of node \(i\) is defined as
\[
\tau_i = \sum_{t=0}^{\infty} \big( P_{ii}(t) - \pi_i \big),
\]
where \(P_{ii}(t)\) denotes the probability that a walker starting at \(i\) is at \(i\) at time \(t\).

The \emph{random walk centrality} of node \(i\) is then defined as
\[
C_i = \frac{\pi_i}{\tau_i}.
\]

The quotient \(C_i\) captures the balance between two effects: the visitation probability of the node, encoded in the stationary weight \(\pi_i\), and the local time relaxation behavior of the walk around that node, encoded in \(\tau_i\). Hence, \(C_i\) quantifies how efficiently node \(i\) can receive and redistribute information under a random walk dynamics on the network.

Moreover, the mean first passage times satisfy the relation
\[
\mathbb{E}[T_{j} \mid X_0 = i ] - \mathbb{E}[T_{i}\mid X_0 = j ] = C_j^{-1} - C_i^{-1},
\]
where
\[
T_{j} := \inf \{ t \ge 0 : X_t = j \}
\]
denote the \emph{first passage time} (or hitting time) from to node \(j\), that is, the first time at which a random walker starting at \(i\) visits \(j\). This shows that nodes with larger random walk centrality are, on average, reached more rapidly by the random walk.

\subsection{Dynamic centrality}
We now extend the result of Noh and Rieger  to the time-continuous case. To the best of our knowledge, this result have not appeared explicitly in previous literature.

In a similar way, for a \(CTMC\) with probability transitions \(p_{ij}(t)\) we define
\[
T_{ij} := \inf \{ t \ge 0 : X_t = j \}
\]
denote the first passage (hitting) time of node \(j\). The mean first passage time (MFPT) from \(i\) to \(j\) is defined as
\[
m_{ij} := \mathbb{E}_i[T_{ij}],
\]
where \(\mathbb{E}_i[\cdot]\) denotes expectation conditioned on \(X_0 = i\). Following \cite{vanKampen2007} the following equation holds.  
\begin{equation}
    \label{renewal}
    \hat{f}_{i,j}(s)=\frac{\hat{p}_{i,j}(s)}{\hat{p}_{j,j}(s)},
\end{equation}
where $\hat{f}_{ij}(s)=\int_{0}^{\infty}e^{-st}f_{i,j}(t)dt,$ is the Laplace transform of the first-passage probability, and $\hat{p}_{ij}(s)=\int_{0}^{\infty}e^{-st}p_{ij}(t)dt$ is the Laplace transform of the transition probability $p_{ij}(t)$.  \newline

Using equation \ref{renewal} we obtain 
\[m_{ij}=-\left.\frac{d}{ds}\hat{f}_{ij}(s)\right|_{s=0}
 \]

Define $R_{ij}^{(m)}:=\int_{0}^{\infty} t^m (p_{ij}(t)-\pi(j))dt$. For $s>0$, in virtue of the dominated convergence theorem and the uniform convergence of the series expansion of the function  $x\mapsto e^{-x}$ on compacts, we have
\begin{equation*}
    \begin{split}
        S\left(R_{ij}^{(m)}\right)(s)=\sum_{m=0}^{\infty} R_{ij}^{(m)}(-1)^{m}\frac{s^m}{m!}
         &=\int_{0}^{\infty}\left(\sum_{m=0}^{\infty}t^m \frac{s^m}{m!}(-1)^m\right)(p_{ij}(t)-\pi(j))dt \\
         &=\int_{0}^{\infty}e^{-st}(p_{ij}(t)-\pi(j))dt \\
         &=\hat{p}_{ij}(s)-\frac{\pi(j)}{s}.
    \end{split}
\end{equation*}
Therefore, 
\[\hat{f}_{ij}(s)=\frac{\hat{p}_{ij}(s)}{\hat{p}_{jj}(s)}=\frac{\pi(i)+sS\left(R_{ij}^{(m)}\right)(s)}{\pi(i)+S\left(R_{jj}^{(m)}\right)(s)}.\]
For $i\neq j$ the above equality lead to 

\begin{equation}
    \label{meanpassage}
    m_{ij}=-\left.\frac{d}{ds}\hat{f}_{ij}(s)\right|_{s=0}=\frac{R_{jj}^{(0)}}{\pi(j)}-\frac{R_{ij}^{(0)}}{\pi(j)}.
\end{equation}
Note that, equation \ref{balanceprob}, implies that $\pi(j)R_{ji}^{(0)}=\pi(i)R_{ij}^{(0)}$, hence

\[m_{ij}-m_{ji}=\left( \frac{R_{ji}^{(0)}}{\pi(i)}-\frac{R_{ij}^{(0)}}{\pi(j)} \right)+\left( \frac{R_{jj}^{(0)}}{\pi(j)}-\frac{R_{ii}^{(0)}}{\pi(i)} \right)=\frac{R_{jj}^{(0)}}{\pi(j)}-\frac{R_{ii}^{(0)}}{\pi(i)},\]
\begin{definition}
    For a state \(i\in S\) we define its CTMC centrality as the number 
    \[C_{CTMC}(i)=\frac{\pi(i)}{R_{ii}^{(0)}}.\]
\end{definition}
Hence in the CTMC case, we can make the same conclusion as in \cite{Nohcentrality2004}; the CTMC centrality $C_{CTMC}(i)$ quantifies how central is the state $i$ regarding its potential to be accessible from other states. That is, the following implication holds for two states $i,j\in S$: 

\begin{equation*}
    C_{CTMC}(i)>C_{CTMC}(j)\iff m_{ij}>m_{ji}.
\end{equation*}

Therefore, in average, the system access from the state $j$ to the state $i$ faster than from $i$ to the state $j$. This is the continuous time analog to the result and conclusion obtained by Noh and Rieger in \cite{Nohcentrality2004}. 

\subsection{Continuous time Markov Chain in ultrametric spaces}
The matrix representation of the operator \(L_X\) is a \(Q\)-matrix , and therefore, the operator generates a stochastic process where the transition probability matrix is given by \(P_t=exp(tL_X)\). The main properties of this process are described in the Theorem below. 

\begin{thm}
    Let \((X,d,m)\) be a finite measurable ultrametric space with ultrametric Laplacian \(L_X\) and \(m\) being a probability measure. Then \(L_X\) generates an irreducible continuous time Markov chain with transition function \[P_t(x,y)=e^{tL_X}\delta_{y}(x)=H_t(x,y)m(y).\]
The process is reversible, and the measure \(m\) is the only stationary probability measure.
\end{thm}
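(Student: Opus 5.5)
Implicitly we assume $k>0$, as the paper does when it speaks of a positive kernel, and we write $\langle\cdot,\cdot\rangle$ for the inner product of $L^2(X,m)$.

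The plan has three steps: show that $L$ is a $Q$-matrix with all off-diagonal entries positive, identify the transition function through the heat kernel, and read off reversibility and uniqueness from detailed balance and the spectral decomposition.

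\textbf{$Q$-matrix and Markov chain.} By the definition of the ultrametric Laplacian matrix, the off-diagonal entries are $L_{x,y}=k(d(x,y))m(y)>0$ and each row sums to zero. A finite matrix with nonnegative off-diagonal entries and zero row sums generates a stochastic semigroup $P_t=e^{tL}$. Nonnegativity follows from writing $e^{tL}=e^{-ct}e^{t(L+cI)}$ with $c$ chosen so that $L+cI\ge 0$ entrywise. The identity $P_t\mathbf 1=\mathbf 1$ follows from $L\mathbf 1=0$. Hence $L_X$ generates a continuous-time Markov chain with $P_t(x,y)=(e^{tL_X}\delta_y)(x)$. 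Applying the defining property of the heat kernel to $u=\delta_y$ gives $(e^{tL_X}\delta_y)(x)=\sum_z H_t(x,z)\delta_y(z)m(z)=H_t(x,y)m(y)$, which is the stated formula.

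\textbf{Irreducibility.} Every off-diagonal rate is strictly positive, so the transition network is complete and connected. Equivalently, $(L+cI)^n$ is entrywise positive for $n\ge 1$, so $P_t(x,y)>0$ for all $t>0$ and all $x,y$.

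\textbf{Reversibility and stationarity.} The detailed balance identity $m(x)L_{xy}=m(y)L_{yx}$ holds because both sides equal $m(x)m(y)k(d(x,y))$. This extends to $m(x)P_t(x,y)=m(y)P_t(y,x)$ for all $t$, either through the equivalence (\ref{detailedbalance})$\iff$(\ref{balanceprob}) or directly because $H_t$ is symmetric, as the eigenexpansion (\ref{heatkernelgen}) shows. Summing this relation over $x$ and using $\sum_x P_t(y,x)=1$ gives $mP_t=m$, so $m$ is stationary; the same conclusion is the content of (\ref{eq:invariantmeasure}).

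\textbf{Uniqueness.} Let $\nu$ be a stationary probability and set $g=\nu/m$. Stationarity means $\sum_x g(x)m(x)P_t(x,y)=g(y)m(y)$. Using detailed balance this becomes $\sum_x P_t(y,x)g(x)=g(y)$, that is, $e^{tL_X}g=g$. Letting $t\to\infty$ and applying (\ref{ergodicitylimitequation}), or equivalently the bound (\ref{L2normheatkernel}), yields $g\equiv\langle g,\mathbf 1\rangle=\sum_x\nu(x)=1$, so $\nu=m$. This step needs every nontrivial eigenvalue to satisfy $\lambda_n<0$. That holds because the eigenvalue formula (\ref{eigenvaluereal}) gives $\lambda_n=-\sum_{y\notin B_n}k(d(x_0,y))m(y)-k(\diam B_n)m(B_n)<0$ whenever $k>0$.

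\textbf{Main obstacle.} The hardest step is making sure that the convergence in (\ref{ergodicitylimitequation}) really rests on the strict negativity of all nontrivial eigenvalues, and that this is where positivity of $k$ enters; the eigenvalue formula above settles it.
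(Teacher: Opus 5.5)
Your proof is correct and follows the same skeleton as the paper's proof. $L$ is a $Q$-matrix with strictly positive off-diagonal rates, which gives the chain and its irreducibility; $P_t(x,y)=(e^{tL})_{xy}=H_t(x,y)m(y)$ by applying the heat-kernel identity to $\delta_y$; and stationarity of $m$ comes from (\ref{eq:invariantmeasure}) or, equivalently, from detailed balance. The paper's proof leaves reversibility and uniqueness implicit, resting respectively on the earlier detailed-balance remark and on irreducibility of a finite chain, so your extension of detailed balance to $P_t$ and your uniqueness argument via (\ref{ergodicitylimitequation}) and $\lambda_n<0$ complete it. The only cosmetic fix is to take $c>\max_x|L_{xx}|$, so that $(L+cI)^n$ is entrywise positive already for $n=1$; positivity of $P_t$ holds regardless, thanks to the identity term in $e^{t(L+cI)}$.
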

\begin{proof}
    Since \(L\) is a \(Q\)-matrix, it has attached a continuous time Markov chain, and since \(L_{xy}>0\) for all \(x,y\in S\), \(x\neq y\), then the Markov chain is irreducible. The equality from the Theorem follows from the relation \(e^{tL_X}\delta_y(x)=(e^{tL})_{xy}\), for all \(x,y\in S\). The fact that \(m\) is a stationary measure for the process follows from equation \ref{eq:invariantmeasure}.
    \end{proof}
From this result and the last section, we now have a clear probabilistic interpretation of the off-diagonal entries of \(L\):
\[k(d(x,y))m(y)=\frac{d}{dt}\mathbb{P}(X_{t+h}=y|X_t=x) |_{t=0},\]
that is, \(k(d(x,y))m(y)\) is the instantaneous transition rate, or in other words, the transition density rate per unit of time from \(x\) to \(y\).  Since \(d\) is ultrametric and \(k\) is a non-increasing function, the process is compatible with the topology, in the following sense: if \(d(x,y)\leq d(x,z)\) and \(m(y)=m(z)\) then \(\mathbb{P}(X_{t+h}=y|X_t=x)\geq\mathbb{P}(X_{t+h}=z|X_t=x)\), i.e., the stochastic system at \(X_t=x\) has more probability to occupy the nearest states. It is clear that the measure \(m\) can bias the jumping rate, nevertheless in the next applications \(m\) will be the normalized counting measure, hence the property \(m(y)=m(z)\) is satisfied trivially for all pair of points. 

On the other hand, notice that for the ultrametric phylogenetic Laplacian \(L_{\mathcal{T}}\) the following holds: 
\[
F(h_0-h(x\wedge y))=\frac{d}{dt}\mathbb{P}(X_{t+h}=y|X_t=x)\big|_{t=0},
\]
hence, the random process attached to this generator is compatible with the phylogenetic structure of the tree, in particular for \(F(x)=x\) the rates depend linearly on the divergence time.

\subsection{Dynamic centrality for Ultrametric spaces as a topological descriptor}

We now proceed to study the CTMC centrality on ultrametric spaces. Denote by \(\tau_i:=R_{ii}^{(0)}\). Let \((X,d)\) a finite ultrametric space. And let \(P_t(x,y)\) be the probability transition function attached to the operator \(L_X\) with kernel \(k\). In order to study the centrality \(C_{CTMC}(i)\), we need to investigate the quantities \(\tau_i\) and \(\pi(i)\), for the latter we have the following result.

\begin{lem}
    Let \((X,d)\) a finite ultrametric space equipped with a probability measure \(m(x)\), and \(P_t(x,y)\) the probability transition function attached to the operator \(L_X\).  Then \(\pi(i)=m(i)\) for all \(i\in X\), that is 
    \(\lim_{t\rightarrow \infty}P_t(x,y)=m(y),\quad x,y\in X.\)
\end{lem}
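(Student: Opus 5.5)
The plan is to compute $\lim_{t\to\infty}P_t(x,y)$ directly from the spectral representation of the heat kernel. By the previous theorem, $P_t(x,y)=H_t(x,y)m(y)$. By equation \eqref{heatkernelgen}, $H_t(x,y)=\pi(x,y)+\sum_n e^{\lambda_n t}\psi_n(x)\psi_n(y)$, where $\pi(x,y)\equiv 1$ is the contribution of the trivial eigenvector $\psi_0\equiv 1$. Since $m$ is a probability measure, $\psi_0$ is indeed normalized in $L^2(X,m)$, so it enters this orthonormal expansion with coefficient one.

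The key step is to show that every nontrivial eigenvalue is strictly negative. By Proposition \ref{prop:eigenvalueofultrametricoperator}, for each internal node $n$ we have
\[\lambda_n=-\sum_{y\in X\setminus B_n}k(d(x_0,y))m(y)-k(\diam(B_n))m(B_n).\]
With a positive kernel $k$ and a positive measure $m$, the second term is strictly negative. Hence $\lambda_n<0$ for every internal node $n$. Alternatively, I could invoke the proposition on the spectral gap, which gives $|\lambda_{gap}|=k(\diam(X))>0$ when $k$ is decreasing. I prefer the direct argument because it needs only positivity of $k$, which is already implicit in the irreducibility assumption of the preceding theorem.

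Because $X$ is finite, the sum over $n$ is finite. Each term $e^{\lambda_n t}\psi_n(x)\psi_n(y)$ therefore tends to $0$, and $H_t(x,y)\to 1$. This gives $P_t(x,y)\to m(y)$ for every $x$. Quantitatively, equation \eqref{L2normheatkernel} already shows $\|e^{tL_X}-\Pi\|\to 0$, and equation \eqref{ergodicitylimitequation} states the same limit at the level of operators. Applying it to $u=\delta_y$ gives the pointwise limit immediately, and this is the shortest route I will write up.

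Finally, I identify $\pi$ with $m$. The stationary distribution of an irreducible finite CTMC is unique, and by definition it satisfies $\pi(j)=\lim_{t\to\infty}p_{ij}(t)$. So $\pi(i)=m(i)$, which is consistent with the invariance identity \eqref{eq:invariantmeasure} and with detailed balance. The only genuine obstacle is justifying strict negativity of the nonzero eigenvalues, i.e. that $0$ is a simple eigenvalue. This is settled by positivity of $k$ together with the completeness of the eigenbasis $\mathcal{B}_X$.
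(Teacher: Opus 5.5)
Your proposal is correct and follows essentially the paper's route: the paper's proof simply combines $P_t(x,y)=H_t(x,y)m(y)$ with the exponential decay $\|e^{tL_X}-\Pi\|_{L^2}=e^{\lambda_{gap}t}$ from \eqref{L2normheatkernel}. Your explicit check that every $\lambda_n<0$, using only $k>0$ in Proposition \ref{prop:eigenvalueofultrametricoperator}, supplies the step the paper leaves implicit there, namely that $\lambda_{gap}<0$.
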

\begin{proof}
    The results follows from equation \ref{L2normheatkernel}, since \(P_t(x,y)=H_t(x,y)m(y)\). 
\end{proof}
This allows us to simplify the expression of \(\tau_i\), 
\[\tau_i=\int_{0}^{\infty}(P_t(i,i)-\pi(i))dt=m(i)\int_{0}^{\infty}(H_t(i,i)-1)dt\]
therefore,
\[C_{CTMC}(i)=\frac{1}{\int_{0}^{\infty}(H_t(i,i)-1)dt}\]
Recall that 
\[H_t(x,x)-1=e^{\lambda(B_n) t }\left(\frac{1}{m(\{x\})}-\frac{1}{m(B_n)}\right)+\sum_{T:B_n\subseteq T \subsetneq X} e^{\lambda(T^{+}) t}\left(\frac{1}{m(T)}-\frac{1}{m(T^{+})}\right),\]
therefore, 
\begin{equation*}
    \begin{split}
        \int_{0}^{\infty}(H_t(i,i)-1)dt&=\\
        &-\frac{1}{\lambda(B_n)}\left(\frac{1}{m(\{x\})}-\frac{1}{m(B_n)}\right)+\sum_{T:B_n\subseteq T \subsetneq X} -\frac{1}{\lambda(T^{+})}\left(\frac{1}{m(T)}-\frac{1}{m(T^{+})}\right)
    \end{split}
\end{equation*}
where \(n=F(i)\), the father of node \(i\), this leads to the following result for finite ultrametric spaces. 

\begin{thm}
    Given a finite ultrametric space \((X,d)\) with probability measure \(m\), the CTMC centrality attached to the CTMC generated by \(L_X\) is given by 
    \begin{equation}
\label{centralityultrametricequation} 
C_{CTMC}(i)=\left(-\frac{1}{\lambda(B_n)}\left(\frac{1}{m(\{i\})}-\frac{1}{m(B_n)}\right)+\sum_{T:B_n\subseteq T \subsetneq X} -\frac{1}{\lambda(T^{+})}\left(\frac{1}{m(T)}-\frac{1}{m(T^{+})}\right)\right)^{-1}
    \end{equation}
\end{thm}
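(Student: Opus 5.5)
The plan is to compute the two ingredients of $C_{CTMC}(i)=\pi(i)/\tau_i$ separately, using the spectral description of the heat kernel obtained in Section~\ref{chapter2}.

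\textbf{Stationary weight.} By the preceding lemma, $\pi(i)=m(i)$. Since $P_t(i,i)=H_t(i,i)m(i)$, this gives
\[
\tau_i=R^{(0)}_{ii}=m(i)\int_0^\infty \big(H_t(i,i)-1\big)\,dt .
\]
The factor $m(i)$ cancels in the quotient, so $C_{CTMC}(i)=\big(\int_0^\infty (H_t(i,i)-1)\,dt\big)^{-1}$. It then remains to evaluate this integral in closed form.

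\textbf{Diagonal heat kernel.} Next I would specialize the block formula \eqref{eq:heatkernelteorem} to $x=y=i$. The relevant ball is not $x\wedge x$ (a leaf), but the smallest internal node containing $i$, namely $B_n$ with $n=F(i)$. To be safe, I would rederive this from \eqref{heatkernelgen} and \eqref{projectionkernel}. Decompose $H_t(i,i)=1+\sum_{v}e^{\lambda_v t}E_v(i,i)$, where $v$ ranges over the internal nodes whose ball contains $i$, i.e. the ancestors $\gamma_r(i)\setminus\{i\}$; eigenspaces attached to other nodes vanish at $i$.
\begin{itemize}
\item For $v=n$, the child of $n$ containing $i$ is $\{i\}$ itself, so $E_n(i,i)=\frac{1}{m(i)}-\frac{1}{m(B_n)}$.
\item For a strict ancestor $v=T^+$, with $T$ the child of $v$ containing $i$ (so $B_n\subseteq T\subsetneq X$), one gets $E_{T^+}(i,i)=\frac{1}{m(T)}-\frac{1}{m(T^+)}$.
\end{itemize}
This reproduces the displayed expression for $H_t(x,x)-1$, a finite sum of exponentials with nonnegative coefficients.

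\textbf{Integration.} Each exponent $\lambda_v$ is strictly negative. From \eqref{eigenvaluereal}, $-\lambda_v=\sum_{y\notin B_v}k(d(x_0,y))m(y)+k(\diam B_v)m(B_v)>0$ when $k>0$. Hence every term is integrable on $(0,\infty)$ with
\[
\int_0^\infty e^{\lambda_v t}\,dt=-\frac{1}{\lambda_v}.
\]
The sum is finite, so termwise integration is justified and yields
\[
\int_0^\infty \big(H_t(i,i)-1\big)\,dt=-\frac{1}{\lambda(B_n)}\Big(\frac{1}{m(i)}-\frac{1}{m(B_n)}\Big)-\sum_{T:\,B_n\subseteq T\subsetneq X}\frac{1}{\lambda(T^+)}\Big(\frac{1}{m(T)}-\frac{1}{m(T^+)}\Big).
\]
This quantity is strictly positive. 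Inverting it gives \eqref{centralityultrametricequation}.

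\textbf{Main obstacle.} The only delicate point is bookkeeping in the diagonal case of \eqref{eq:heatkernelteorem}, since that theorem was stated for $x\wedge y$ internal. I would handle it by the direct projection-kernel computation above rather than by citing the formula, checking that each ancestor contributes exactly once. A secondary point is that the definition of $R^{(0)}_{ii}$ needs convergence of the integral. This follows from the exponential decay \eqref{L2normheatkernel}, or more directly from the strict negativity of the finitely many eigenvalues involved.
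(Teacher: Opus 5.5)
Your proof is correct and follows the paper's argument. You use $\pi(i)=m(i)$ to reduce $C_{CTMC}(i)$ to $\big(\int_0^\infty (H_t(i,i)-1)\,dt\big)^{-1}$, take the diagonal of the block heat-kernel formula with $n=F(i)$, and integrate each exponential to $-1/\lambda$. Your rederivation of $H_t(i,i)$ from the projection kernels, and your check that every $\lambda_v<0$ when $k>0$, fill in two points the paper leaves implicit: the heat-kernel theorem is stated with $B_n=x\wedge y$, so it must be reinterpreted on the diagonal, and the integrals must converge.
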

It is clear that the topology of the ultrametric space will affect the accessibility of the sates during the random process. This effect can be effectively studied using the equation \ref{centralityultrametricequation}. Since \(k\) is non increasing, it follows that the eigenvalues as a function of one of the radius of a given ball is a decreasing function, therefore, increasing the radius of one of the balls, say \(B_n\), will decrease the index \(C_{CTMC}(i)\) for all states \(i\in X\) such that \(n\in \gamma_{\raiz}(i)\), hence one state will be more accessible when the balls in its history \(\gamma_{\raiz}\) have smaller radii. \newline

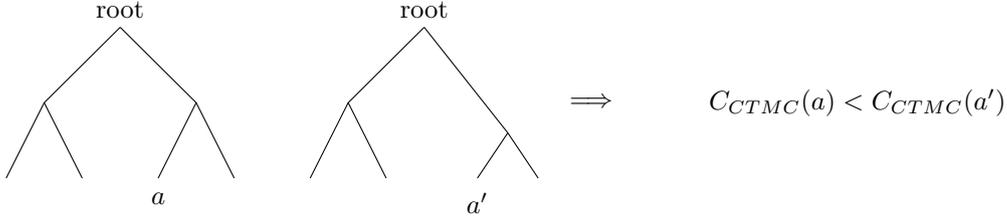
\begin{figure}[h!]
    \centering
    \begin{tikzpicture}[scale=1]

  \coordinate (r1) at (0,0);
  \coordinate (l1) at (-1,-1);
  \coordinate (r1r) at (1,-1);

  \coordinate (a) at (-1.5,-2);
  \coordinate (b) at (-0.5,-2);
  \coordinate (c) at (0.5,-2);
  \coordinate (d) at (1.5,-2);

  \draw (r1) -- (l1) -- (a);
  \draw (l1) -- (b);
  \draw (r1) -- (r1r) -- (c);
  \draw (r1r) -- (d);

  \node[above] at (r1) {root};
  \node[below=2pt] at (a) {};
  \node[below=2pt] at (b) {};
  \node[below=2pt] at (c) {$a$};
  \node[below=2pt] at (d) {};

  \coordinate (r2) at (4,0);
  \coordinate (l2) at (3,-1);
  \coordinate (ap) at (2.5,-2);
  \coordinate (bp) at (3.5,-2);

  \coordinate (r2r) at (5.1,-1.4);
  \coordinate (cp) at (4.7,-2);
  \coordinate (dp) at (5.5,-2);

  \draw (r2) -- (l2) -- (ap);
  \draw (l2) -- (bp);
  \draw (r2) -- (r2r);
  \draw (r2r) -- (cp);
  \draw (r2r) -- (dp);

  \node[above] at (r2) {root};
  \node[below=2pt] at (ap) {};
  \node[below=2pt] at (bp) {};
  \node[below=2pt] at (cp) {$a'$};
  \node[below=2pt] at (dp) {};

  \node at (6.2,-1) {$\Longrightarrow$};

  \node at (9.7,-1) {$C_{CTMC}(a) < C_{CTMC}(a')$};

\end{tikzpicture}
    \caption{Decreasing the radius of a ball make the states inside of it more accessible.}
\end{figure}
The other parameter which may affect the centrality is the measure of a given ball. When \(m(B_n)\) increases, then the eigenvalue absolute value of the affected eigenvalues increases, we conclude that from equation \ref{centralityultrametricequation}, that increasing the measure of a ball increases the centrality of its elements. 

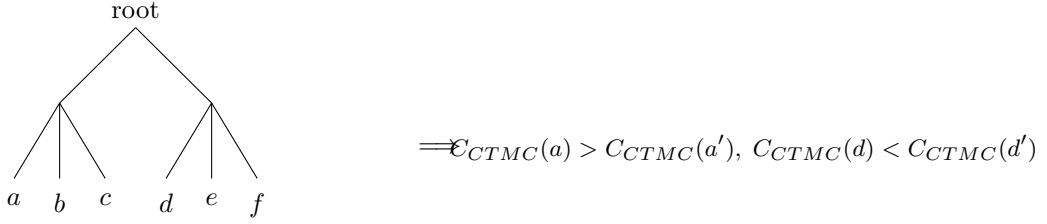
\begin{figure}[h!]
    \centering
\begin{tikzpicture}[scale=1]

  \begin{scope}
    \coordinate (r1) at (0,0);
    \coordinate (l1) at (-1,-1);
    \coordinate (r1r) at (1,-1);

    \coordinate (a) at (-1.6,-2);
    \coordinate (b) at (-1.0,-2);
    \coordinate (c) at (-0.4,-2);

    \coordinate (d) at (0.4,-2);
    \coordinate (e) at (1.0,-2);
    \coordinate (f) at (1.6,-2);

    \draw (r1) -- (l1);
    \draw (r1) -- (r1r);

    \draw (l1) -- (a);
    \draw (l1) -- (b);
    \draw (l1) -- (c);

    \draw (r1r) -- (d);
    \draw (r1r) -- (e);
    \draw (r1r) -- (f);

    \node[above] at (r1) {root};
    \node[below=2pt] at (a) {$a$};
    \node[below=2pt] at (b) {$b$};
    \node[below=2pt] at (c) {$c$};
    \node[below=2pt] at (d) {$d$};
    \node[below=2pt] at (e) {$e$};
    \node[below=2pt] at (f) {$f$};
  \end{scope}

  \begin{scope}[yshift=-3.2cm]
    \coordinate (r2) at (0,0);
    \coordinate (l2) at (-1,-1);   
    \coordinate (r2r) at (1,-1);

    \coordinate (ap) at (-1.4,-2);
    \coordinate (bp) at (-0.6,-2);

    \coordinate (cp) at (0.2,-2);
    \coordinate (dp) at (0.8,-2);
    \coordinate (ep) at (1.4,-2);
    \coordinate (fp) at (2.0,-2);

    \draw (r2) -- (l2);
    \draw (r2) -- (r2r);

    \draw (l2) -- (ap);
    \draw (l2) -- (bp);

    \draw (r2r) -- (cp);
    \draw (r2r) -- (dp);
    \draw (r2r) -- (ep);
    \draw (r2r) -- (fp);

    \node[above] at (r2) {root};
    \node[below=2pt] at (ap) {$a'$};
    \node[below=2pt] at (bp) {$b'$};
    \node[below=2pt] at (cp) {$c'$};
    \node[below=2pt] at (dp) {$d'$};
    \node[below=2pt] at (ep) {$e'$};
    \node[below=2pt] at (fp) {$f'$};
  \end{scope}

  \node at (4,-1.6) {$\Longrightarrow$};

  \node at (8,-1.6) {\scriptsize
    $C_{CTMC}(a) > C_{CTMC}(a'),\; C_{CTMC}(d) < C_{CTMC}(d')$
  };

\end{tikzpicture}

    \caption{For the counting measure, increasing the number of leaves of an internal node increases their accessibility.}
\end{figure}
Since, \begin{equation*}
    C_{CTMC}(i)>C_{CTMC}(j)\iff m_{ij}>m_{ji}.
\end{equation*}
We can make explicit the inequality of the RHS. According to the previous section, for a general  CTMC, \(m_{ij}=\frac{R_{jj}^{(0)}}{\pi(j)}-\frac{R_{ij}^{(0)}}{\pi(j)},\) therefore
\begin{equation}
\label{eq:MFPTCTMC}
    \begin{split}
        m_{ij}&=\int_{0}^{\infty}(H_t(j,j)-H_t(i,j))dt\\
        &=-\frac{1}{\lambda(B_n)}\cdot\frac{1}{m(B_n)}-\frac{1}{\lambda(B_{F(j)})}\left(\frac{1}{m(j)}-\frac{1}{m(B_{F(j)})}\right)+\sum_{T:B_{F(j)}\subseteq T\subsetneq B_n}-\frac{1}{\lambda(T^+)}\left(\frac{1}{m(T)}-\frac{1}{m(T^+)}\right)\\
        &=-\frac{1}{\lambda(B_F(j))}\cdot\frac{1}{m(j)}+\sum_{T:B_{F(j)\subseteq T\subsetneq B_n}}\frac{1}{m(T)}\left(\left(-\frac{1}{\lambda(T^+)}\right)-\left(-\frac{1}{\lambda(T)}\right)\right)
    \end{split}
\end{equation}
As we already established,  \(m_{ij}\) is the average time that, for the first time the system occupy the state \(j\) given that the process started at \(i\). Hence, dynamically, the accessibility to the leaf \(j\) from the node \(i\) depend only on its history up to the node \(n=LCA(i,j)\). Topologically,  the quantity \(m_{ij}\) capture the topological ramification of this path. \newline

Indeed, first, the behavior of \(m_{ij}\) depend on the behavior of the differences between the radius of two consecutive balls: If \(\Delta r=|r(a)-r(b)|\), for \(a,b\in\gamma_{n}(j)\) increases, then \(m_{ij}\) increases as well due the functional dependence of \(  \left(-\frac{1}{\lambda(T^*)}\right)-\left(-\frac{1}{\lambda(T^*)}\right)\) as a function of \(\Delta r\). In other words, once the system enters the ball/cluster \(B_n\) it finds easier to access the state \(j\) if the radii of the nested ball containing \(j\) are smaller. Secondly, if the measure of one of those balls increases, then the system has more possible accessible states in the cluster \(B_n\), thereby decreasing the time \(m_{ij}\). This  shows that \(m_{ij}\) reflects the ramification and topology of the path \(j\rightarrow LCA(i,j)\): smaller \(m_{ij}\) means a more ramified (in the topological tree) path \(j\rightarrow LCA(i,j)\) or more compactly nested balls containing \(j\). \newline

Therefore, \(C(i)>C(j)\) does not only have a dynamic meaning, but also a topological one, the inequality can be interpreted as \textit{"the path \(i\rightarrow LCA(i,j)\) is more ramified or connects more quickly to other leaves that the path \(j\rightarrow LCA(i,j)\) }". A more dynamically isolated state give rise to a more isolated leaf in the ultrametric space. This will be central for many of the applications later on. To give an example on how the indices reflect richness of the topology we have the following corollary. 
\begin{cor}
    If the finite ultrametric space is level-regular then \(C(a)=C(b)\) for all \(a,b\in X\).  
\end{cor}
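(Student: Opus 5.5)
The plan is to read the corollary off the closed formula \eqref{centralityultrametricequation} for \(C_{CTMC}(i)\). I take ``level-regular'' to mean three things. All leaves sit at the same level. All internal nodes at a given level have the same number of children. Balls at the same level have the same diameter. I also assume, as throughout the section, that \(m\) is the normalized counting measure.

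The first step is to show that every quantity in \eqref{centralityultrametricequation} depends only on the level of the ball involved, not on which ball it is. Under level-regularity, every ball \(B_v\) at level \(\ell\) contains the same number of leaves, namely the product of the numbers of children at levels \(\ell,\dots,L-1\), where \(L\) is the common leaf level. Hence \(m(B_v)\) depends only on \(\ell\). Likewise \(m(\{i\})=1/|X|\) is the same for every leaf.

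Next I treat the eigenvalues. By Proposition \ref{prop:eigenvalueofultrametricoperator},
\[
\lambda_v=-\sum_{l\in\gamma_r(v)} m(B_l)\bigl[k(\diam(B_l))-k(\diam(B_{F(l)}))\bigr].
\]
Each summand involves only masses and diameters at the levels along the history \(\gamma_r(v)\). The history of any level-\(\ell\) node passes through exactly one node at each level \(0,\dots,\ell\). So \(\lambda_v\) is a function of \(\ell\) alone.

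Now fix a leaf \(i\) and let \(n=F(i)\), which sits at level \(L-1\). The sum in \eqref{centralityultrametricequation} runs over the balls \(T\) with \(B_n\subseteq T\subsetneq X\), that is, over the chain of ancestors of \(n\) below the root. This chain contains exactly one ball at each level \(1,\dots,L-1\). Each term is built from \(\lambda(T^{+})\), \(m(T)\) and \(m(T^{+})\), and by the previous step these depend only on the levels of \(T\) and \(T^{+}\). The leading term depends only on \(\lambda(B_n)\), \(m(\{i\})\) and \(m(B_n)\), which are again level-determined. Therefore the whole expression is identical for every leaf, and \(C(a)=C(b)\) for all \(a,b\in X\).

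The main obstacle is definitional. The paper never defines ``level-regular'', so the argument rests on my interpretation above. In particular, the equal-diameters-per-level hypothesis is genuinely needed: without it the eigenvalues can differ between balls at the same level, and the conclusion fails. I would state this reading explicitly at the start of the proof. If the paper instead intends a general probability measure \(m\), I would add the hypothesis that \(m\) is equidistributed across the children of each node; the same level-by-level argument then goes through unchanged.
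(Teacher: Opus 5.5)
Your proposal is correct and follows the route the paper implicitly intends. The paper gives no separate proof and treats the corollary as an immediate reading of equation \ref{centralityultrametricequation}, in which every ingredient ($m(\{i\})$, $m(T)$, $m(T^{+})$, $\lambda(T^{+})$) depends only on the level under level-regularity. Your caveats are well placed: the paper never defines ``level-regular'', and the conclusion genuinely requires equal diameters (or at least equal values of $k$) at each level together with a measure equidistributed over children.
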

Let us close this section with an application to phylogenetic trees. As before, we use the phylogenetic tree of Primate genera, we compute the CTMC centrality for two different kernels of the ultrametric phylogenetic Laplacian, \(h_0-h\) and \(1/d\), where \(d=2h\). 

\begin{figure}[H]
    \centering
    \includegraphics[width=1\linewidth]{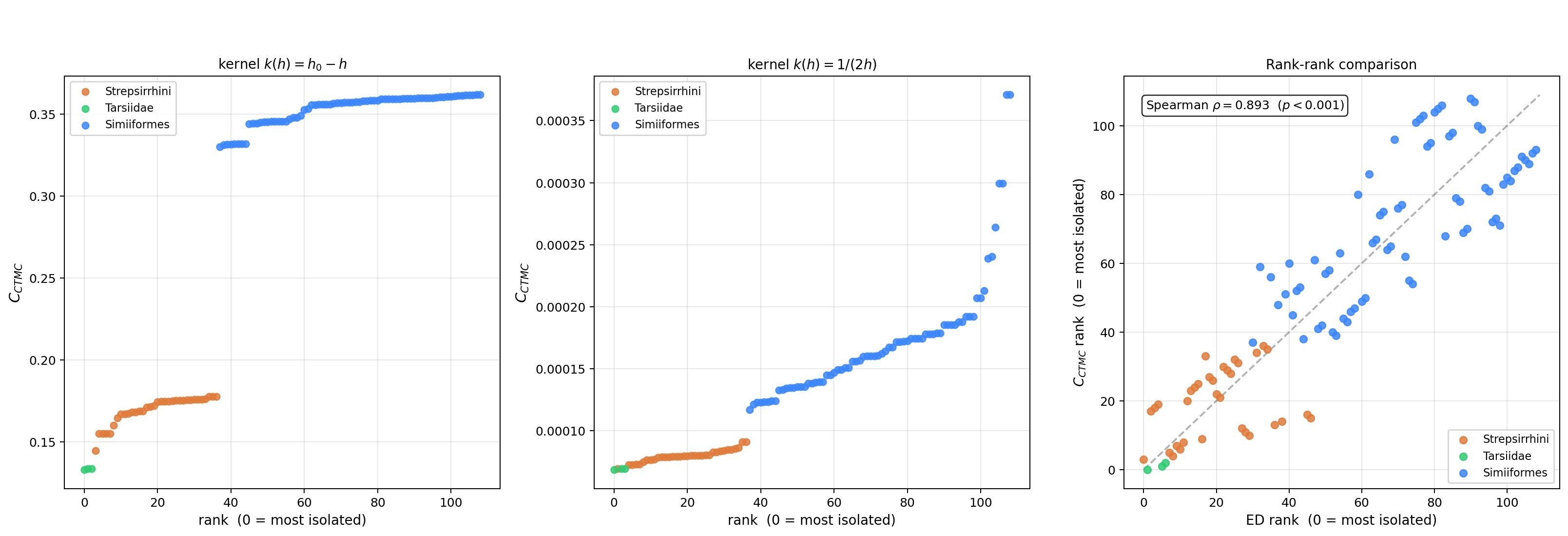}
    \caption{Dynamic centrality for the Primate genus tree. First plot (from left to right) shows the centrality for the kernel \(k=h_0-h\), second for kernel \(k=1/2h\). The last plot show the high but not full correlation between the classic ED rank and the rank given by the \(C_{CTMC}\) index.  }
    \label{fig:ctmc_centrality}
\end{figure}

Figure \ref{fig:ctmc_centrality} shows the CTMC centrality index
$C_{CTMC}(i)$ computed for all $109$ primate genera under two kernel
choices, alongside a rank-rank comparison with the Evolutionary
Distinctiveness (ED) score, see \cite{Redding2014}. In the first two
panels, genera are ordered by ascending centrality, so the leftmost
points correspond to the most phylogenetically isolated species, which under the linear kernel $k(h) = h_0 - h$ (left panel) are: the three
Tarsiidae (\textit{Tarsius}, \textit{Carlito}, \textit{Cephalopachus})
and \textit{Daubentonia}, the only representative of family
Daubentoniidae. Lorisidae
(\textit{Loris}, \textit{Nycticebus}, \textit{Arctocebus},
\textit{Perodicticus}) follow immediately, consistent with their
position as an ancient and species-poor clade. At the opposite extreme,
the most central genera belong to the Cercopithecinae, a dense and
species-rich radiation with many close relatives sharing long internal
branches. Under the kernel $k(h) = 1/(2h)$ (middle panel), which
down-weights ancient splits and up-weights recent divergences, the
global ordering is preserved at the extremes but differs in the middle
ranks, where recently diversified clades
gain centrality relative to the linear kernel.

The principal discrepancies occur for genera such as
\textit{Megaladapis}, \textit{Lepilemur}, and \textit{Phaner}, which
ED ranks among the five most isolated due to their long pendant edges,
while $C_{CTMC}$ assigns them substantially lower isolation ranks
because their parent nodes carry relatively high spectral weight,
reflecting the presence of multiple close relatives within
Strepsirrhini. Conversely, \textit{Daubentonia} is ranked as the most
isolated by ED but only fourth by $C_{CTMC}$, where the three
Tarsiidae displace it at the extreme. This illustrates a fundamental difference between the
two indices: ED captures \emph{local} uniqueness along the path to the
root:
\begin{equation*}
    ED(i) = \frac{1}{N} \sum_{T:\, \{i\} \subseteq T \subseteq X}
    \frac{h(T^+) - h(T)}{m(T)}.
\end{equation*}
In contrast $C_{CTMC}$ incorporates the \emph{global} spectral
structure of the tree, weighting each split by its eigenvalue
$\lambda(B)$, which depends on the mass distribution of the entire
phylogeny.

The CTMC centrality index $C_{CTMC}$ offers several structural differences compared with existing measures of evolutionary isolation reviewed in 
\cite{Redding2014}. First, and most fundamentally, it is not a
heuristic: it emerges directly from the spectral theory of the
ultrametric Laplacian operator $L_{\mathcal{T}}$ and admits a precise
probabilistic interpretation in terms of the dynamics of a
continuous-time random walk on the leaves of the phylogenetic tree. This grounds the
index in a mathematical framework rather than an ad-hoc
scheme. Second, the kernel function $k$ provides an interpretable resolution tool in which to tune the trade-off between
\emph{uniqueness} (sensitivity to recent, tip-near divergences) and
\emph{originality} (sensitivity to ancient, root-near divergences)
identified by \cite{Redding2014} as the key dimension separating
existing metrics; each kernel choice is mathematically justified within
the CTMC framework rather than chosen arbitrarily. Third, because
$C_{CTMC}$ is defined through the eigenvalues of a global operator, it
incorporates information from the entire tree topology rather than only
the path from a species to the root. Finally, the formulation extends
naturally to non-ultrametric trees and phylogenetic networks by
redefining the underlying operator, addressing a limitation explicitly
noted by \cite{Redding2014} for most existing metrics.

\section{Conclusions and outlook.}
We have developed a unified spectral framework for finite ultrametric 
phylogenetic trees, grounding the analysis of phylogenetic structure in 
operator theory and stochastic dynamics. The results presented here, 
spectral reconstruction, eigenmode trait decomposition, and CTMC centrality, are exact, computationally efficient, and biologically interpretable, 
and they are supported by numerical experiments on empirical primate data.

Several directions remain open. The eigenbasis introduced here provides a 
natural interface with Geometric Deep Learning: ultrametric Laplacians admit 
explicit diagonalization, making their spectral parameters directly 
interpretable within graph neural network architectures, and we conjecture 
that this could prove fundamental for the development of phylogenetic 
comparative methods in that framework. A second direction concerns the 
stochastic independence of the contrasts: while the eigenmode decomposition 
is orthogonal by construction, it remains to identify a stochastic process 
under which the coefficients $c_P$ are statistically independent, in the 
spirit of Felsenstein's independent contrasts. Finally, the systematic study 
of kernel selection, which taxonomic scale to emphasize, and how to 
recover a target spectral structure, and a deeper comparison of 
$C_{\mathrm{CTMC}}$ with modern conservation indices beyond ED, represent 
natural next steps.

\section*{Data availability}
The phylogenetic tree of Primate genera used in this study was obtained from the TimeTree database \cite{Kumar2022}. Trait data were obtained from the PanTHERIA dataset \cite{Pantheria}. No new data were generated in this work.

\section*{Acknowledgements}

 Patrick Bradley is warmly thanked for the many useful conversations we had and for his general advice. This work is supported by the Deutsche Forschungsgemeinschaft
under project number 469999674.

\bibliography{biblio}

\end{document}